\patchcmd{\@maketitle}{\LARGE \@title}{\fontsize{16}{19.2}\selectfont\@title}{}{}\makeatother
\theoremstyle{plain}
\newtheorem{theorem}{Theorem}[section]
\newtheorem{lemma}[theorem]{Lemma}
\title{Faster Algorithms for Growing Prioritized Disks and 
Rectangles\footnote{A preliminary version appeared as
H.-K.-Ahn, S.~W. Bae, J.~Choi, M.~Korman, W.~Mulzer, E.~Oh, J.-W.~Park,
A.~v.~Renssen, A.~Vigneron. \emph{Faster Algorithms for Growing
Prioritized Disks and Rectangles}. Proc.~28th ISAAC, pp.~3:1--3:13.
The work by H.-K. Ahn, J. Choi, E. Oh was supported 
by the MSIT (Ministry of Science and ICT), Korea, under the SW Starlab 
support program (IITP--2017--0--00905) supervised by the IITP (Institute 
for Information \& communications Technology Promotion.). S.W. Bae was 
supported by Basic Science Research Program
through the National Research Foundation of Korea (NRF) funded by the
Ministry of Education (2015R1D1A1A01057220 and 2018R1D1A1B07042755).
M.~Korman\@ was supported in part by NSF awards CCF-1422311 and KAKENHI No. 17K12635, Japan. 
 W.~Mulzer  was supported in part by DFG Grants MU 3501/1 and
 MU 3501/2 and by ERC StG 757609.
 J.-W.~Park was supported by the NRF Grant 2011-0030044 (SRC-GAIA) 
 funded by the Korea government (MSIP).
 A.~v.~Renssen was supported by JST ERATO Grant Number JPMJER1201, Japan.
    A. Vigneron was supported by the 2016 Research Fund (1.160054.01) of
UNIST (Ulsan National Institute of Science and Technology).}} 
\author[1]{Hee-Kap Ahn}
\author[2]{Sang Won Bae}
\author[1]{Jongmin Choi}
\author[3]{Matias Korman}
\author[4]{Wolfgang Mulzer}
\author[1]{Eunjin Oh}
\author[5]{Ji-won Park}
\author[6]{Andr\'e van Renssen}
\author[7]{Antoine Vigneron}
\affil[1]{Department of Computer Science and Engineering, POSTECH, Republic of Korea\\
\texttt{\{heekap, icothos, jin9082\}@postech.ac.kr}}
\affil[2]{Division of Computer Science
and Engineering, Kyonggi University, Republic of Korea\\
\texttt{swbae@kgu.ac.kr}}
\affil[3]{Department of Computer Science, Tufts University, United States\\
  \texttt{matias.korman@tufts.edu}}
\affil[4]{Institut f\"ur Informatik, Freie Universit\"at Berlin, Germany\\
  \texttt{mulzer@inf.fu-berlin.de}}
\affil[5]{School of Computing, KAIST, Republic of Korea\\
\texttt{wldnjs1727@kaist.ac.kr}}
\affil[6]{School of Computer Science, University of Sydney, Sydney, Australia\\
  \texttt{andre.vanrenssen@sydney.edu.au}}
\affil[7]{School of Electrical and Computer Engineering, UNIST, Republic of Korea\\
\texttt{antoine@unist.ac.kr}}
\date{}
\begin{document}
\maketitle

\newcommand{\IR}{\mathbb{R}}
\newcommand{\eps}{\varepsilon}

\newcommand{\qbox}{b}
\newcommand{\quadtree}{\mathcal{Q}}
\newcommand{\cnp}{\textup{\textsf{CNP}}\xspace}
\newcommand{\cquadtree}{\mathcal{Q}_C}
\newcommand{\etal}{\emph{et al.}\xspace}

\begin{abstract}
Motivated by map labeling, Funke, Krumpe, and 
Storandt~[IWOCA 2016] introduced the following
problem: we are given a sequence of $n$ disks in the 
plane. Initially, all disks have radius $0$, and they 
grow at constant, but possibly different, 
speeds. 
Whenever two disks touch, the one with the higher 
index disappears. The goal is to determine the elimination order,
i.e., the order in which the disks disappear.
We provide the first general subquadratic algorithm for this 
problem. Our solution extends
to other shapes (e.g., rectangles), and 
it works in any fixed dimension. 

We also describe an alternative algorithm that is
based on quadtrees. Its running time is 
$O\big(n \big(\log n + \min \{ \log \Delta, \log \Phi \}\big)\big)$, 
where $\Delta$ is the ratio of the fastest and the slowest growth 
rate and $\Phi$ is the ratio of the largest and the smallest distance 
between two disk centers. This improves the running times of previous
algorithms by Funke, Krumpe, and Storandt~[IWOCA 2016],  
Bahrdt~\emph{et al.}~[ALENEX 2017], and 
Funke and Storandt~[EuroCG 2017].

Finally, we give an $\Omega(n\log n)$ lower bound, 
showing that our quadtree algorithms are almost tight. 
\end{abstract}

\section{Introduction}

Suppose we have a digital map in which certain locations 
are marked with textual labels. As we zoom out,
the visible area increases, while the individual 
features become smaller. To keep readability, the 
labels must grow during the zooming process. At some point, 
the labels will collide, and some of them have to be eliminated
to avoid clutter.
For an efficient implementation, we would like to determine
when and in which order the labels disappear.

This can be formalized as follows:
we are given a sequence $D_1, \dots, D_n$ of
$n$ \emph{growing disks}. 
Each disk $D_i$ starts out as a point 
$p_i \in \IR^2$, and it grows 
with a fixed \emph{growth rate} $v_i > 0$. 
Thus, at any time $t \geq 0$, the disk $D_i$ is centered 
at $p_i$ and has radius $v_it$.
The index $i$ corresponds
to the \emph{priority} (a smaller index means a higher 
priority). When
two disks meet, we eliminate the one with lower 
priority from the arrangement. 
More precisely, for $1 \leq i < j \leq n$, 
let $t(i,j) = |p_ip_j|/ (v_i + v_j)$ be the time when $D_i$ and $D_j$ 
touch. Then, if neither $D_i$ nor $D_j$ has been 
removed before time $t(i,j)$, we eliminate $D_j$ at 
this time, while $D_i$ remains. Our goal is to determine 
the \emph{elimination order}, that is, the instants of time and the
order in which the disks are removed.
In this version, we chose to represent the labels as disks, 
but many other shapes are possible, e.g., to model 
rectangular labels or map icons with more complex boundary shapes. 
Thus, it is desirable to have a solution that is widely applicable 
and adapts easily to small variations in the problem statement.

This problem was introduced by 
Funke, Krumpe, and Storandt~\cite{FunkeKrSt16}.
They observed that a straightforward simulation 
of the growth process with a priority queue runs
in time $O(n^2 \log n)$.  They also gave an algorithm that
takes expected time $O\big(n(\log^6 n + \Delta^2 \log^2 n+ 
\Delta^4\log n)\big)$, where $\Delta=\max_i v_i/\min_j v_j$ is the 
maximum ratio between two growth rates.
Subsequently, Bahrdt~\etal~\cite{Bahrdt17} improved this
to an algorithm that runs 
in worst-case time $O\big(\Delta^2 n(\log n+\Delta^2)\big)$.
This generalizes to growing balls in arbitrary
fixed dimension $d$, with running time 
$O\big(\Delta^d n(\log n+\Delta^d)\big)$. 
Recently, Funke and Storandt~\cite{FunkeSt17}
presented two further parameterized algorithms for the problem. The 
first algorithm runs in time  
$O\big(n \log \Delta(\log n + \Delta^{d-1})\big)$, for arbitrary 
dimension $d$, while the second algorithm is restricted
to the plane and runs in time $O\big(C n \log^{O(1)} n\big)$, where 
$C$ denotes the number of distinct growth rates. 
If we are interested 
in finding only the first pair of touching disks, our problem 
becomes the \emph{weighted closest pair} of the
disk centers.  Formann showed how to compute it in 
$O(n \log n)$ time, which is optimal~\cite{Formann93}.

\begin{table}
\centering
\begin{tabular}{|c|c|c|c|c|c|} \hline
\textbf{Shape} &\textbf{Time} &\textbf{Space} &\textbf{Method} & \textbf{Sec.} \\ 
\hline\hline
Balls/Boxes, $\IR^d$ & $O(dn^{2})$ & $O(n)$ & Priority sort 
&\S\ref{sec:simple} \\ 
\hline
Disks, $\IR^2$ & expected $O\big(n^{\frac{5}{3}+\eps}\big)$ &
$O\big(n^{\frac{5}{3}+\eps}\big)$ &   &  \\ 
\cline{1-3} 
Rectangles, $\IR^2$ & $O(n^{\frac{11}{6}+\eps})$ & 
$O\big(n^{\frac{11}{6}+\eps}\big)$ &  Bucketing & 
\S\ref{sec:bucketing} \\ 
\cline{1-3} 
$\mathcal{S}\mathcal{A}_k$, $k \geq 4$ & 
$O\big(n^{2 - \frac{1}{2k - 2}+\eps}\big)$ & 
$O\big(n^{2-\frac{1}{2k - 2} + \eps}\big)$ & &  \\
 \hline
Cubes, $\IR^d$ & $O\big(n\log^{d + 2} n\big)$& 
$O\big(n\log^{d + 1} n\big)$ &
Orthogonality& \S\ref{sec:cubes} \\ 
\hline \hline
Disks, $\IR^2$ &
$O(n\log \Phi \min  \{\log \Delta, \log \Phi\})$& $O(n \log \Phi)$ & 
Quadtree  & 
\S\ref{sec:uncomp} \\ \hline
Disks, $\IR^2$ &$O(n(\log n + \min\{\log \Delta, \log \Phi\}))$& 
$O(n)$
& Compressed quadtree  &\S\ref{sec:comp}  \\ \hline
\end{tabular}

\caption{Summary of our results.
The $O(dn^2)$-time algorithm in the first row works for growing 
objects of any shape in $\IR^d$ such that the touching time
of any two of them can be computed in $O(d)$ steps.
$\mathcal{SA}_k$ stands for any semialgebraic 
shape that is described with $k$ parameters. $\Phi$ denotes 
the \emph{spread} of the disk centers and
$\Delta=\max_i v_i/\min_j v_j$ is the 
maximum ratio between two growth rates.}
\label{tab:res}
\end{table}

\subparagraph*{Our results.}
We present a simple algorithm that runs in time $O(dn^2)$ 
in any fixed dimension $d$ (Section~\ref{sec:simple}).
For a faster running time, this method can be 
combined with bucketing and an advanced query data structure for 
lower envelopes of algebraic surfaces~\cite{Agarwal97,Koltun04} 
(Section~\ref{sec:bucketing}).
This yields an algorithm with $O(n^{5/3+\eps})$ expected 
time for disks and $O(n^{11/6 + \eps})$ expected time for 
rectangles in two dimensions.
These are the first subquadratic-time algorithms for the problem.
More generally, we show that the elimination sequence of a set of $n$
growing objects of any semi-algebraic shape described with $k$
parameters can be computed in subquadratic time, for any 
fixed $k \geq 4$.  In Section~\ref{sec:cubes}, we 
consider the case of growing squares. These objects are much 
simpler, and we can use ray shooting techniques 
to get a near-linear running time of $O(n\log^{d+2} n)$.

We also consider a completely different approach 
based on quadtrees (Section~\ref{sec:quad}). 
The running time of these algorithms 
also depends on the \emph{spread} $\Phi$ of the disk 
centers (the ratio of the maximum and the minimum 
distance between two disk centers) and the ratio $\Delta$ 
between the fastest and slowest growth rate. 
Table~\ref{tab:res} summarizes our results.
Finally, we give an $\Omega(n \log n)$ lower bound with a 
simple reduction from sorting.
Our method using compressed quadtrees is thus nearly optimal.

\paragraph{Note.} Parallel to our work, 
Castermans~\etal~\cite{cssv-acgs-18} considered a related
problem for growing squares in the plane. Whenever two squares meet, 
they are replaced by a new one located at their weighted center. 
They are also interested in the elimination/replacement 
sequence. Even though our algorithms are slightly faster 
(by polylogarithmic factors) and more general 
(their algorithm can only handle squares),
they are not comparable, since our techniques do not apply in 
the setting where replacements are possible.

\paragraph{Notation.} For any $1 \leq i \leq n$, we denote by 
$t_i$ the time when disk $D_i$ is eliminated. Since 
$D_1$ remains throughout, we set $t_1 = \infty$. We denote by 
$t(i, j) = |p_ip_j|/(v_i+v_j)$ the time at which the 
disks $D_i$ and $D_j$ would touch, supposing
that no other disk has interfered. 
We assume general position in the sense that all times 
$t(i, j)$, for $i \neq j$, are pairwise distinct.

\section{A simple quadratic algorithm}\label{sec:simple}

We present a simple iterative way to determine the elimination 
times $t_i$. This method will be useful in our bucketing scheme.
As noted above, we have $t_1 = \infty$.
We  need a way to determine $t_i$ if 
$t_1, \dots, t_{i - 1}$ are known, for $i \geq 2$.
The next lemma shows how to do this: we just 
need to find the disk of higher priority that first touches $D_i$ and 
is still alive at that point.

\begin{lemma}\label{lem:quadratic}
Let $i \in \{2, \dots, n\}$, and let
\begin{equation}\label{equ:quadratic}
j^* = \textnormal{argmin}_{j = 1, \dots, i-1} 
\{t(i, j) \mid t(i,j) \leq t_j\}.
\end{equation}
Then, the disk $D_i$ is eliminated by the disk $D_{j^*}$, 
and $t_{i} = t(i, j^*)$.
\end{lemma}

\begin{proof}
On the one hand, we have $t_i \leq t(i,j^*)$, because at time
$t(i, j^*)$, the disk $D_i$ would meet the disk $D_{j^*}$ 
that has higher priority and that has not been eliminated yet
(by (\ref{equ:quadratic}), we have 
$t(i, j^*) \leq t_{j^*}$).
On the other hand, we have $t_i \geq t(i, j^*)$, because 
every disk that $D_i$ could meet before time $t(i, j^*)$
either has lower priority or has been eliminated before the encounter.
\end{proof}

\noindent
The condition (\ref{equ:quadratic}) can be implemented with a
straightforward \textbf{for}-loop.
This leads to Algorithm~\ref{alg:first}.

\begin{algorithm}
  \caption{A quadratic time algorithm\label{alg:first}}
  \begin{algorithmic}[1]
    \Function{EliminationOrder}{$p_1, \dots, p_n$, $v_1, \dots, v_n$}
	\State $t_1 \gets \infty$ 
	\For{$i \gets 2, \dots, n$}
		\State $t_i \gets t(i, 1)$
		\For{$j \gets 2,i-1$}
			\If{$t_j\ge t(i, j)$ and $t_i\ge t(i, j)$}
				\State $t_i \gets t(i, j)$
			\EndIf
		\EndFor
	\EndFor
	\State $S \gets \langle D_1,\dots,D_n \rangle$
	\State Sort $S$ using key $t_i$ for each disk $D_i$
	\State \Return $S$
    \EndFunction
    \end{algorithmic}
\end{algorithm}	

\begin{theorem}\label{thm:first}
Algorithm~\ref{alg:first} computes the elimination order of
a set of prioritized disks in $O(n^2)$ time and $O(n)$ space. 
It generalizes to growing objects of any shape in $\IR^d$ such that 
the touching time of any pair of them can be computed in $O(d)$ 
steps, with running time $O(dn^2)$.
\end{theorem}

\begin{proof}
The correctness follows directly from
Lemma~\ref{lem:quadratic}. The running time and space 
analysis is straightforward.
Lemma~\ref{lem:quadratic} is purely combinatorial and requires only 
that the times $t(i,j)$ are well defined. Thus, 
Algorithm~\ref{alg:first} can be generalized by using an 
appropriate $O(d)$-time subroutine for computing $t(i,j)$.  
Then, the claim is immediate.
\end{proof}

In particular, Theorem~\ref{thm:first} shows that the
elimination order for $d$-dimensional balls or rectangles 
can be computed in $O(dn^2)$ time, for any $d \geq 1$.

\section{A subquadratic algorithm using bucketing}\label{sec:bucketing}

We now improve Algorithm~\ref{alg:first} with
bucketing and an appropriate data structure. The main idea 
is as follows: in Algorithm~\ref{alg:first}, we go through the 
disks by decreasing order of priority and determine for each one which
disk eliminates it. This is 
done by examining each disk of higher priority individually, leading
to a quadratic running time. To avoid this, we partition
the disks into \emph{buckets} of size
$m$, where $m$ will be fixed later. For each bucket $B$, we build a 
data structure that can find in sublinear time 
the disk in $B$ that eliminates the current disk. 
Thus, we process the $m$ disks in each bucket with
a single query, at the expense of an additional overhead
for building the data structure. Furthermore, we must compute
the elimination events within each bucket, using
Algorithm~\ref{alg:first}. 

\paragraph{Elimination queries.}
We now describe the query data structure that is used in a single
bucket.
Let $B \subseteq \{1, \dots, n\}$ be a
contiguous set of $m$ indices, and suppose we know 
the elimination time $t_j$ of every disks $D_j$ with $j \in B$. 
In an \emph{elimination query}, we are given a query
index $q > \max B$, and we ask
for the disk $D_{j^*}$ with $j^* \in B$, that eliminates the
query disk $D_q$ (i.e., the first disk in $B$ that meets
$D_q$ and that has not been eliminated yet by \emph{any} other
disk of higher priority). 
The argument from Lemma~\ref{lem:quadratic} shows that we can 
find $j^*$ by the following slight adaptation of (\ref{equ:quadratic}):
\begin{equation}\label{equ:bucket}
j^* = \text{argmin}_{j \in B} \big\{t(q, j) \mid t(q,j) \leq t_j\big\}.
\end{equation}
This leads to a natural interpretation of elimination queries as 
vertical ray shooting with four-dimensional lower envelopes:
a growing disk $D$ corresponds to a point $(x, y, v) \in \IR^3$,
where $(x, y)$ is the center of $D$ and $v$ is the growth rate.
For each $j \in B$, consider the function $f_j: \IR^3 \rightarrow \IR$ 
defined by
\[
  f_j(x, y, v)=
    \begin{cases}
      t\big((x, y, v), j\big), & \text{if } t\big((x, y, v), j\big) < t_j, \\
      \infty, & \text{otherwise,}
    \end{cases}
\]
where $t\big((x,y,v), j\big)$ denotes the time when $D_j$ and the
growing disk given by $(x, y, v)$ touch.
For $q > \max B$, let $(x_q, y_q, v_q) \in \IR^3$ be the
point that represents $D_q$. Then, (\ref{equ:bucket}) tells us that 
the elimination query 
$q$ corresponds to finding the value $\min_{j \in B} f_j(x_q, y_q, v_q)$ 
vertically above $(x_q,y_q,v_q)$ and the index
$\text{argmin}_{j \in B} f_j(x_q, y_q, v_q)$ of the function that attains
it. The pointwise minimum $E(w) = \min_{j \in B} f_j(w)$ is called
the \emph{lower envelope} of the functions $f_j$, and the problem
of determining the function that achieves this minimum for a given
$w \in \IR^3$ is called a \emph{vertical ray shooting query}.
Vertical ray shooting in lower envelopes is a well-studied 
problem in computational geometry~\cite{HalperinSh17}.
In our case, we can apply the following result of 
Agarwal~\etal~\cite{Agarwal97}:
\begin{theorem}[Theorem~3.3 in Agarwal~\etal~\cite{Agarwal97}]
\label{thm:4dlower_env}
Let $\mathcal{F}$ be a given collection of $m$ trivariate, 
possibly partially defined functions, all algebraic of constant 
maximum degree, and whose domains of definition (if they are 
partially defined) are each defined by a constant number of 
algebraic equalities and inequalities of constant maximum 
degree. Then, for any $\eps > 0$, the lower envelope $E_\mathcal{F}$
of $\mathcal{F}$ can be computed in randomized expected time
$O(m^{3+\eps})$, and stored in a data structure of size 
$O(m^{3+\eps})$, so that, given any query point $w \in \IR^3$,
we can compute $E_\mathcal{F}(w)$, as well as the function\textup{(}s\textup{)}
attaining $E_\mathcal{F}$ at $w$ in $O(\log^2 m)$ time.
\end{theorem}
The data structure of Agarwal~\etal~\cite{Agarwal97} works by
first computing the $0$-, $1$-, and $2$-dimensional faces
of the \emph{minimization diagram} for $\mathcal{F}$, i.e.,
the maximally connected regions in $\IR^3$ in which the lower
envelope of $\mathcal{F}$ is achieved by four, three, or two functions
from $\mathcal{F}$. For this, we go over all pairs
$f, g$ of distinct functions in $\mathcal{F}$, and we 
determine the part of the
two-dimensional surface $\Gamma = \{x \in \IR^3 \mid f(x) = g(x)\}$ 
where the lower envelope is achieved by $f$ and $g$. This 
can be done in expected time $O(m^{1+\eps})$ with a randomized 
incremental construction, by inserting the functions from 
$\mathcal{F} \setminus \{f, g\}$ in a random order and keeping track 
of (the trapezoidal decomposition of) the part of $\Gamma$ 
that is not cut off by any function inserted so 
far~\cite[Theorem~2.3]{Agarwal97}.
Once the faces of the minimization diagram are available,
it can be converted into a suitably well-behaved subdivision
of $\IR^3$~\cite[Lemma~3.1]{Agarwal97} to which the point 
location structure of Preparata and Tamassia can be 
applied~\cite{PreparataTa92}. The structure of Preparata and
Tamassia requires $O\big(m^{3+\eps}\big)$ space and preprocessing time, while
a query needs $O\big(\log^2 m\big)$ steps. The result from
Theorem~\ref{thm:4dlower_env} follows. This directly 
translates into the following lemma on elimination queries:

\begin{lemma}\label{lem:agarwal}
Let $B \subseteq \{1, \dots, n\}$ with $|B| = m$.
Then, for any fixed $\eps > 0$, elimination queries for $B$ 
can be answered in $O(\log^2 m)$ time, using space and expected
preprocessing time $O(m^{3+\eps})$.
\end{lemma}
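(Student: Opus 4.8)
The plan is to cast an elimination query for $B$ as a vertical ray\-/shooting (point location) query in the lower envelope of $|B|=m$ partially defined trivariate algebraic functions in $\IR^4$, and then to invoke the four\-/dimensional envelope machinery of Agarwal et al.~\cite{Agarwal97} essentially as a black box. Recall that a query disk is a point $(x,y,v)\in\IR^3$, and that, writing $p_j=(a_j,b_j)$ and $q=(x,y)$, we have $t(j,D(x,y,v))=|p_j q|/(v_j+v)$ with $v_j>0$ and $v>0$.

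First I would verify that each $f_j$, $j\in B$, is a partially defined trivariate function of \emph{constant description complexity}, as required by~\cite{Agarwal97}. Its domain is $\Omega_j=\{(x,y,v)\in\IR^3 : v>0 \text{ and } t(j,D(x,y,v))<t_j\}$; since $v_j+v>0$ on $\Omega_j$ and $t_j>0$, the condition $t(j,D(x,y,v))<t_j$ is equivalent to the polynomial inequality $(x-a_j)^2+(y-b_j)^2 < t_j^2\,(v_j+v)^2$, so $\Omega_j$ is a semialgebraic set cut out by a bounded number of polynomials of bounded degree (for $j=1$ we have $t_1=\infty$ and $\Omega_1=\{v>0\}$). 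Over $\Omega_j$ the graph of $f_j$ is $\{(x,y,v,t) : t>0 \text{ and } t^2(v_j+v)^2=(x-a_j)^2+(y-b_j)^2\}$, a semialgebraic surface in $\IR^4$ of constant description complexity. The ``$\infty$'' branch of $f_j$ can be ignored: for a query point $(x_q,y_q,v_q)$, some $f_j$ is finite there iff $D_q$ is eliminated by a disk of $B$, and if all the $f_j$ are infinite the query should (and will) report that no disk of $B$ eliminates $D_q$.

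Next I would apply the result of Agarwal et al.~\cite{Agarwal97} on lower envelopes in four dimensions to the family $\{f_j\}_{j\in B}$: the lower envelope of $m$ such partial functions has combinatorial complexity $O(m^{3+\eps})$ and, equipped with a hierarchical point\-/location structure, can be constructed in randomized expected time $O(m^{3+\eps})$; thereafter, given $(x_q,y_q,v_q)\in\IR^3$ one locates in $O(\log^2 m)$ time the face of the envelope vertically above it, equivalently the index $j$ minimizing $f_j(x_q,y_q,v_q)$ (and the point\-/location structure returns this $j$, not merely the value). Unwinding the definitions, that index is precisely $j^*=\text{argmin}_{j\in B}\{t(q,j)\mid t(q,j)\le t_j\}$ --- under general position the open constraint $t(q,j)<t_j$ in the definition of $f_j$ and the closed constraint $t(q,j)\le t_j$ agree on the pairs that matter --- and by the argument of Lemma~\ref{lem:quadratic} this is exactly the disk of $B$ that eliminates $D_q$, which answers the elimination query.

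The only genuine work is in the first step: confirming that the domains $\Omega_j$ and the graphs of the $f_j$ meet the constant\-/description\-/complexity hypothesis of~\cite{Agarwal97} (so that the $\eps$ in the exponent is the sole slack), and that the envelope and point\-/location results there are stated for partially defined functions over semialgebraic domains. This is routine but must be checked; everything after it is a direct invocation of~\cite{Agarwal97}.
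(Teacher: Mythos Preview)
Your proposal is correct and follows exactly the approach the paper intends: the paper itself gives no proof beyond the sentence ``a direct consequence of a result by Agarwal et al.~\cite{Agarwal97}'', having already set up the functions $f_j$ and observed that an elimination query is a vertical query on their lower envelope in $\IR^4$. Your write-up simply makes explicit the constant-description-complexity verification and the black-box invocation that the paper leaves implicit.
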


\paragraph{Bucketing.}
With all the tools in place, we can
now describe our subquadratic algorithm.  
We group the disks into $\lceil n/m \rceil$ \emph{buckets}
$B_1, \dots, B_{\lceil n/m\rceil}$, such that
the $k$th bucket $B_k$ contains the disks 
$D_{(k-1)m+1}$, $\dots$, $D_{km}$ (the last bucket might
not be full). There are $O(n/m)$ buckets,
each with at most $m$ disks.
As in Algorithm~\ref{alg:first}, we compute the 
elimination times $t_1,\dots,t_n$, in 
this order. As soon as the elimination times
of all the disks in a bucket $B_k$ have been determined, we construct 
the elimination query data structure for $B_k$. 
By Lemma~\ref{lem:agarwal}, for each bucket, 
this takes $O(m^{3 + \eps})$ expected time and space, 
for a total of $O(nm^{2+\eps})$ expected time and space. 

Now, in order to determine the elimination time $t_i$ of a
disk $D_i$, we must check all the previous buckets 
(as well as the bucket containing $D_i$). We first perform 
elimination queries for all previous buckets,
that is, the buckets $B_k$ with $1 \leq k \leq \lfloor (i-1)/m \rfloor$. 
There are $O(n/m)$ such queries, each taking $O(\log^2 m)$ time. 
Then, we handle the disks 
that are in the same bucket as $D_i$ by inspecting all of them, 
which takes $O(m)$ time. We return the smallest of all the
resulting elimination times. The total time, for $i = 1, \dots, n$, is thus
$O((n^2/m)\log^2 m + nm)$. 
Overall, we obtain an expected running time
of $O(nm^{2+\eps} + (n^2/m)\log^2 m)$.
We balance the terms by setting $m = \lfloor n^{1/3} \rfloor$,
to get an algorithm that takes $O(n^{5/3+\eps})$ space and expected time.

\begin{theorem}\label{thm:bucketing}
The elimination sequence of a set of $n$ growing disks can be 
computed in $O(n^{5/3+\eps})$ space and expected time,
for any fixed $\eps > 0$.
\end{theorem}

\paragraph{Generalizations.}
The subquadratic algorithm generalizes to other shapes. 
For example, consider the problem of growing rectangles in $\IR^2$. 
Each rectangle is given by
the $x$- and $y$-coordinates of two opposing corners at time $t = 1$
(this lets us 
deduce the center and the speed of the rectangle). These are four 
parameters, so
elimination queries now can be handled by vertical ray shooting
in lower envelopes in $\IR^5$.
In this setting, we  employ a general data structure for
point location in arrangements of high-dimensional surfaces or surface
patches. Chazelle~\etal~\cite{ChazelleEdGuSh91} presented such
a structure, based on geometric divide and conquer.
Their result 
requires a decomposition of (subsets of) this arrangement 
into \emph{elementary cells} of constant complexity and depends
on the complexity of such a decomposition. The best relevant
bounds are due to Koltun~\cite{Koltun04}. By plugging
his combinatorial bounds into the scheme of 
Chazelle~\etal~\cite{ChazelleEdGuSh91}, he obtains the following
theorem:
\begin{theorem}[Theorem~5.1 in~\cite{Koltun04}]\label{thm:koltun}
A collection $\mathcal{F}$ of $m$ fixed-degree algebraic surfaces
or surface patches in $\IR^d$, for $d \geq 4$, can be preprocessed in
time $O(m^{2d - 4 + \eps})$ into a data structure of size 
$O(m^{2d - 4 + \eps})$ such that the cell in the arrangement of
$\mathcal{F}$ that contains a query point can be located in time 
$O(\log m)$.
\end{theorem}
A few remarks are in order: 
the bound in Theorem~\ref{thm:koltun} holds for any fixed $\eps > 0$.
The preprocessing algorithm is deterministic (for this, we need 
to replace the random sampling step in the scheme of 
Chazelle~\etal~\cite{ChazelleEdGuSh91} with the deterministic 
$\eps$-net construction by Chazelle and 
Matou\v{s}ek~\cite[Theorem~4.6]{Chazelle00}, which can be done 
in a black box fashion).
Since the point location actually takes
place in the vertical decomposition of the arrangement of
$\mathcal{F}$, Theorem~\ref{thm:koltun}
can also be used for vertical ray shooting. Thus, 
for our case of $m$ growing rectangles, Theorem~\ref{thm:koltun} means that 
a data structure for elimination queries with query 
time $O(\log m)$ can be constructed in $O(m^{6+\eps})$ space and 
time, for any fixed $\eps > 0$.

We now apply the same approach as for growing disks. With bucket
size $m$, we then obtain an algorithm that determines the 
elimination order in $O(nm^{5+\eps} + (n^2/m)\log m)$ time.
Setting $m = \lfloor n^{1/6} \rfloor$ to balance the terms,
we get the following result:

\begin{theorem}\label{thm:bucketing2}
The elimination sequence of a set of $n$ growing rectangles can 
be computed in $O\big(n^{11/6+\eps}\big)$ deterministic time and space, 
for any $\eps > 0$.
\end{theorem}

More generally, we can use Theorem~\ref{thm:koltun} to handle 
elimination queries for regions defined by any semi-algebraic
shape of constant complexity. If the shape of the object is described 
with $k \geq 4$ parameters, elimination queries translate to vertical ray
shooting in the lower envelope of $m$ 
surfaces or surface patches in $\IR^{k+1}$.
The bucketing approach and Theorem~\ref{thm:koltun} then
yield an algorithm that needs
$O(nm^{2k - 3 +\eps} + (n^2/m)\log n)$ time and
space.
We balance the terms with
$m = \lfloor n^{1/(2k-2)} \rfloor$. This gives an overall running time of 
$O\left(n^{\frac{4k-5}{2k-2}+\eps}\right)$, which is subquadratic for 
any fixed $k \geq 4$.
\begin{theorem}\label{thm:bucketing3}
The elimination sequence of a set of $n$ growing objects of any 
semi-algebraic shape, each described with $k \geq 4$ parameters can 
be computed in $O\left(n^{2-\frac{1}{2k-2}+\eps}\right)$ deterministic 
time and space, for any $\eps > 0$.
\end{theorem}

\section{Growing cubes}\label{sec:cubes}

Axis-aligned cubes in $\IR^d$ are given by $d + 1$ parameters. 
Thus, the general approach from Section~\ref{sec:bucketing} applies.
However, for axis-aligned cubes, much better data structures for
orthogonal range searching and for planar ray-shooting 
can be leveraged for elimination queries. 
In this section, we combine bucketing with orthogonal range searching
and ray shooting techniques to achieve an almost linear bound. 

To simplify the presentation, we first focus on the case $d = 2$. 
A sequence of $n$ growing squares is given by the centers 
$p_1, \dots, p_n$ and the growth rates $v_1, \dots, v_n$. 
At time $t \geq 0$, each square $D_i$ 
has edge length $2v_it$. 
We consider the four \emph{quadrants} around each center 
$p_i = (x_i, y_i)$. The \emph{north}, \emph{east}, \emph{south}, and 
\emph{west} quadrants are, respectively,
$\big\{(x,y) \in \IR^2 \mid y-y_i \geq |x-x_i|\big\}$,   
$\big\{(x,y) \in \IR^2 \mid x-x_i \geq |y-y_i|\big\}$,
$\big\{(x,y) \in \IR^2 \mid -(y-y_i) \geq |x-x_i|\big\}$, 
and $\big\{(x,y) \in \IR^2 \mid -(x-x_i) \geq |y-y_i|\big\}$,
see Figure~\ref{fig:quadrants}.
\begin{figure}
  \centering
  \includegraphics{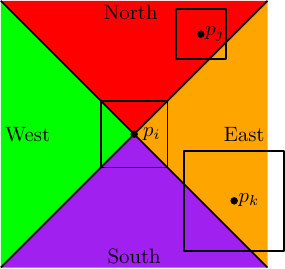}
  \caption{The four quadrants for a growing square $D_i$ with 
  center $p_i$. If the center $p_j$ of $D_j$ lies in the north 
  quadrant, then the possible elimination time between $D_j$ and 
  $D_i$ is determined by the $y$-coordinates (and similarly for 
  $D_k$).}
  \label{fig:quadrants}
\end{figure}

The quadrants determine which sides of the growing square 
$D_i$ are relevant for the elimination time. For example,
suppose that the center $p_j$  of the disk $D_j$ 
is in the north quadrant of $p_i$. 
Then, $D_i$ and $D_j$ will meet when the top side of $D_i$ touches 
the bottom side of $D_j$, and the possible elimination time of $D_i$ 
and $D_j$ is $t(i, j) = (y_j -y_i)/(v_i+v_j)$. 
This can be used for more efficient elimination queries 
as follows: suppose we have a set $B \subset \{1, \dots, n\}$ of 
$m$ growing cubes, and let $q > \max B$ such
that all centers $p_j$ with $j \in B$ lie in the
north quadrant of $p_q$.
Then, an elimination query for $q$ in $B$
just depends on the $y$-coordinates of the disk centers
and the growth rates, the
$x$-coordinates become irrelevant. Thus, we can 
reduce these elimination queries to two-dimensional ray-shooting.

\begin{lemma}\label{lem:cube1}
Let $B \subseteq \{1, \dots, n\}$, $|B|=m$.
We can preprocess $B$ in $O(m \log m)$ time into a data structure 
of $O(m)$ space, so 
that elimination queries can be answered with $O(\log m)$ time, given that
the centers of the squares in $B$ lie 
in the north quadrant of the query square $D_q$.
\end{lemma}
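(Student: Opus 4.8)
The plan is to reduce the elimination query to a one-dimensional ray-shooting problem against a lower envelope of line segments, and then invoke a standard logarithmic-time point-location / ray-shooting structure on that envelope.

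First I would set up the reduction. Since all centers $p_j$ with $j \in B$ lie in the north quadrant of $p_q = (x_q, y_q)$, the touching time of $D_q$ with $D_j$ is $t(q,j) = (y_j - y_q)/(v_q + v_j)$, which depends on the $x$-coordinates of neither $p_q$ nor $p_j$. So think of each disk $D_j$, $j \in B$, as contributing the function $g_j(v) = (y_j - y_q)/(v + v_j)$ of a real variable $v$; evaluating the query amounts to plugging in $v = v_q$. As in the elimination query interpretation from Section~\ref{sec:bucketting}, we truncate: define $\hat g_j(v)$ to equal $g_j(v)$ when $g_j(v) < t_j$ and $+\infty$ otherwise, so that the answer to the query is $\operatorname{argmin}_{j\in B} \hat g_j(v_q)$, and $t_q$ is the corresponding minimum value. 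The key observation is that each $g_j$, as a function of $v > 0$, is a monotone (decreasing) branch of a hyperbola, and the condition $g_j(v) < t_j$ carves out a single interval of $v$-values (a half-line $v > v_j(y_j-y_q)/t_j \cdot \text{(something)}$, i.e.\ $v$ large enough); hence $\hat g_j$ restricted to its finite part is a single monotone arc over one $v$-interval. Equivalently, after the change of variable $u = 1/(v + v_j)$ these become line segments, or one can simply work directly with the $O(m)$ hyperbolic arcs.

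Next I would bound the complexity of the lower envelope of these $m$ arcs and build the search structure. Any two of the arcs $g_j, g_k$ cross at most once (two branches of hyperbolae of this special form — same "type", shifted — intersect in at most one point in the relevant range), so the lower envelope has complexity $O(m)$ by the Davenport–Schinzel bound for curves with at most one pairwise intersection (an $s=1$ sequence, length $\le 2m-1$). I would compute this lower envelope by a divide-and-conquer merge in $O(m\log m)$ time: split $B$ into two halves, recursively compute the two envelopes, then merge them, which for two $x$-monotone envelopes of total size $O(m)$ takes $O(m)$ time. Along the way I would record, with each breakpoint, the index $j$ owning the arc to its right. This gives a sorted array of $O(m)$ $v$-intervals, each tagged with the winning index; a binary search on $v_q$ in this array then returns $j^*$ and $t(q,j^*) = t_q$ in $O(\log m)$ time. (If one prefers the phrasing in the lemma, "ray shooting for the lower envelope of a set of line segments in $\IR^2$": after the substitution $u = 1/(v+v_j)$ the arcs literally become segments, shooting a vertical ray at $u = 1/(v_q + \cdot)$ — though the shift differs per segment, so it is cleaner to just precompute the envelope and binary-search as above.)

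The main obstacle, and the point that needs care, is the truncation at $t_j$: I must verify that restricting $g_j$ to $\{v : g_j(v) < t_j\}$ still yields a family of arcs that pairwise cross at most once, so that the $O(m)$ envelope bound and the $O(m)$ merge survive. Since each truncation removes a prefix (in $v$) of a single monotone arc and leaves a connected piece, the restricted arcs are again single $x$-monotone arcs with the one-intersection property, so the Davenport–Schinzel / merge machinery applies verbatim. A minor secondary point is handling the boundary cases where $p_j$ lies on the diagonal bounding the north quadrant (shared with east or west); general position, or an arbitrary tie-break, disposes of this. Everything else — evaluating $t(q,j)$, the binary search, reading off the eliminating index — is $O(1)$ or $O(\log m)$ per operation, giving the claimed bounds.
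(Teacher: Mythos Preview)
Your reduction has a genuine gap: the functions $g_j(v) = (y_j - y_q)/(v + v_j)$ depend on $y_q$, which is part of the \emph{query}, not of the preprocessed set $B$. Consequently the lower envelope of the $g_j$ cannot be precomputed; it changes with every query square $D_q$, and rebuilding it costs $\Theta(m\log m)$ per query rather than the claimed $O(\log m)$. The same issue contaminates the truncation intervals $\{v : g_j(v) < t_j\}$, whose left endpoints $(y_j - y_q)/t_j - v_j$ also depend on $y_q$. So while the pairwise one-crossing property you checked is fine, the data structure as described is not query-independent.

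The paper's fix is to parametrize by \emph{time} rather than by the query growth rate: associate with each $j\in B$ the line segment $t \mapsto y_j - v_j t$ on $[0,t_j]$ (the $y$-coordinate of the bottom edge of $D_j$), which depends only on $D_j$ and its known elimination time. The query $D_q$ then becomes the ray $t \mapsto y_q + v_q t$ (the top edge of $D_q$), and the first time this ray, shot from below, hits the lower envelope is exactly $\min_j\{t(q,j) : t(q,j)\le t_j\}$. Both query parameters $y_q$ and $v_q$ are absorbed into the ray, so the envelope is fixed once and for all; since all segments start at $t=0$, the envelope has at most $\lambda_2(m)=2m-1$ edges, can be built in $O(m\log m)$ time, and supports ray shooting in $O(\log m)$ time.
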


\begin{proof}
We equip the plane with a coordinate system in which the 
horizontal direction is labeled $t$ and the vertical 
direction is labeled $y$.
For each $j \in B$, consider the line segment 
$f_j: t \mapsto y_j-v_jt$, 
defined for $t \in [0, t_j]$; see Figure~\ref{fig:square-queries}.
Let $E(t) = \min_{j \in B} f_j(t)$ be the lower envelope 
of the line segments. An \emph{edge} of $E$ is a maximal contiguous 
interval where the minimum is achieved by a single function $f_j$.
All the line segments $f_j$ begin on the $y$-axis. This means 
that for any pair $f_i$, $f_j$ of line segments, $f_i$ and $f_j$ 
can alternate at most twice on $E$ (i.e., there can be an edge 
from $f_j$, later an edge from $f_i$, and again later an edge 
from $f_j$, but after that $f_i$ cannot appear again).
Combinatorially, this corresponds to a \emph{Davenport-Schinzel 
sequence} of order $2$ with alphabet size $m$, i.e., a sequence 
$\sigma$ of symbols from a finite set $\Sigma$ with $m$ elements 
such that for any two symbols $a, b \in \Sigma$, the pattern 
$a \dots b \dots a \dots b$ does not appear in 
$\sigma$~\cite{SharirAg95}.
It is well known that such a sequence $\sigma$ has length at most 
$2m - 1 = O(m)$~\cite[Chapter~7.1]{Matousek}.

An elimination query for a square $D_q$ with
center $(x_q, y_q)$ and growth rate $v_q$ translates to shooting a 
ray $r$ from the point $(0, y_q)$ with slope $v_q$. Since $r$ and 
all line segments $f_j$ originate from the $y$-axis, and since 
$r$ starts below the lower envelope $E$, the point where 
$r$ meets the first line segment $f_j$ must lie on $E$.
Thus, we first compute $E$
in $O(m \log m)$ time~\cite{Hershberger89},  and then
we build a ray-shooting data structure
for $E$. The latter can be done with $O(m)$ preprocessing 
time and space, and a query time of $O(\log m)$~\cite{Chazelle94}.
The result follows.
\end{proof}
\begin{figure}
\centering
\includegraphics{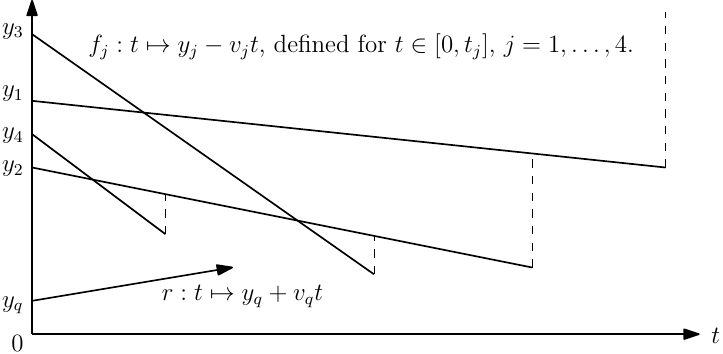}
\caption{The lower envelope of four line segments. An elimination query
  for a square $D_q$ with center $(x_q, y_q)$ and growth rate 
  $v_q$ consists of shooting a 
ray $t \mapsto y_q+v_qt$ from below.}
\label{fig:square-queries}
\end{figure}

We now show how to handle more general elimination queries where 
we do not require $B$ to be in the north quadrant of $D_q$.
This is done using methods from orthogonal range searching.

\begin{lemma}\label{lem:cube2}
Let $B \subseteq \{1, \dots, n\}$, $|B|=m$.
We can preprocess $B$ in time $O(m \log^3 m)$ into a data structure 
of $O(m \log^2 m)$ space, so that 
elimination queries can be answered in $O(\log^3 m)$ time.
\end{lemma}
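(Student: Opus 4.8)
The plan is to remove the restriction in Lemma~\ref{lem:cube1} that all centers of $B$ lie in one quadrant of the query square, at the cost of extra logarithmic factors. The key observation is that for a query square $D_q$ with center $p_q$, the relevant touching time with $D_j$ depends only on which quadrant of $p_q$ the center $p_j$ falls into: if $p_j$ is in the north or south quadrant of $p_q$, then $t(q,j)=|y_j-y_q|/(v_q+v_j)$, and if it is in the east or west quadrant, then $t(q,j)=|x_j-x_q|/(v_q+v_j)$. So the idea is to build four copies of the Lemma~\ref{lem:cube1} data structure, one per quadrant direction, and to route each query point only to the portions of $B$ that actually lie in the correct quadrant relative to $p_q$.

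First I would handle, say, the ``north'' direction (the other three are symmetric, swapping $x\leftrightarrow y$ and negating coordinates). The set of centers $p_j$ that lie in the north quadrant of $p_q$ is exactly $\{p_j \mid y_j - y_q \ge |x_j - x_q|\}$, i.e.\ $y_j - x_j \ge y_q - x_q$ and $y_j + x_j \ge y_q + x_q$. After the coordinate change $u = x+y$, $w = y-x$, this is the two-sided dominance region $\{w_j \ge w_q,\ u_j \ge u_q\}$ — wait, more carefully, it is $w_j \ge w_q$ and $u_j \ge u_q$ does not quite capture it; the north quadrant is $w_j \ge -u_j + (\text{const})$ type constraint, so in the rotated frame it becomes an axis-parallel quadrant (a two-dimensional dominance query). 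Hence I would store the indices of $B$, keyed by their rotated coordinates $(u_j, w_j)$, in a two-level balanced binary search tree (a range tree): the outer tree is sorted by $u_j$, and each node's canonical subset is stored in an inner tree sorted by $w_j$. A north-quadrant query with $p_q$ decomposes into $O(\log m)$ canonical outer nodes, and within each we descend $O(\log m)$ inner nodes, yielding $O(\log^2 m)$ canonical subsets whose union is exactly the set of centers of $B$ lying in the north quadrant of $p_q$.

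For each canonical subset we precompute the lower-envelope / ray-shooting structure of Lemma~\ref{lem:cube1}: for a subset of size $s$ this costs $O(s\log s)$ preprocessing and answers an elimination query (restricted to that subset, all of whose centers lie in the north quadrant of $D_q$ by construction) in $O(\log s)$ time. Summing $\sum s \log s$ over all canonical subsets of a two-level range tree gives $O(m\log^3 m)$ total preprocessing (each index appears in $O(\log^2 m)$ canonical subsets, contributing $O(\log m)$ each to the $s\log s$ sum). An elimination query then probes $O(\log^2 m)$ canonical subsets, each in $O(\log m)$ time, for $O(\log^3 m)$ query time; we take the minimum candidate touching time over the at most four directions, which multiplies everything by a constant. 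This matches the claimed bounds.

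I expect the main obstacle to be purely a bookkeeping matter rather than a conceptual one: verifying that the quadrant membership condition, after the $45^\circ$ rotation, really is an axis-parallel two-sided dominance condition (so a range tree suffices with no fractional cascading subtlety affecting correctness), and double-checking the preprocessing sum $\sum_{\text{canonical } C} |C|\log|C| = O(m\log^3 m)$ rather than $O(m\log^2 m)$ — the extra $\log$ comes precisely from the $\log|C|$ factor in Lemma~\ref{lem:cube1}'s preprocessing, as opposed to a linear-time building block. One should also note that Lemma~\ref{lem:cube1} as stated wants the centers in the north quadrant of $D_q$; inside a canonical subset this holds by the range-tree decomposition, so the precondition is met, but it is worth stating explicitly that the ray-shooting query is sound only because every segment in that subenvelope corresponds to a center in the correct quadrant.
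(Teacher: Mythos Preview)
Your proposal is correct and follows essentially the same approach as the paper: rotate coordinates by $\pi/4$ so that each quadrant becomes an axis-parallel dominance region, build a two-dimensional range tree on the rotated coordinates, and equip every canonical subset with the Lemma~\ref{lem:cube1} ray-shooting structure; a query then hits $O(\log^2 m)$ canonical subsets at $O(\log m)$ each, and preprocessing sums to $O(m\log^3 m)$. Your moment of doubt about the dominance condition is unwarranted---$y_j-y_q\ge |x_j-x_q|$ is exactly $w_j\ge w_q$ and $u_j\ge u_q$ in your $(u,w)$ coordinates, as you first wrote.
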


\begin{figure}
\centering
\includegraphics{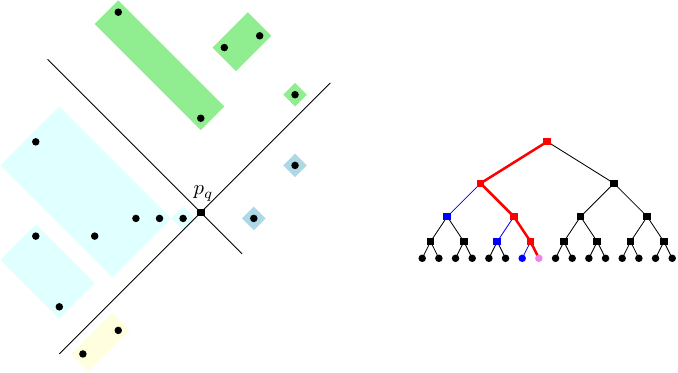}
\caption{(left) A point set $B$ and
the canonical sets for a query point 
$p_q$; (right) the bucketing scheme: each node in the
binary tree corresponds to a bucket. The current square 
is $D_8$ (the eighth leaf from the left). To process $D_8$,
we query all buckets that are left children of the nodes
along the path from the root to the leaf for $D_8$.}
\label{fig:datastructures}
\end{figure}
\begin{proof}
For any square $D_q$, we must be able to perform elimination 
queries on $B$ for all four quadrants that are defined by 
the center $p_q$ of $D_q$.

This is done as follows: let $P$ be the centers of the squares 
in $B$. We build a two-dimensional range tree for 
$B$~\cite[Chapter~5.3]{4M}, 
where the coordinate axes have been rotated by an angle of $\pi/4$.
More precisely, let the rotated coordinates be 
$z_1$ and $z_2$. We sort $P$ according to the $z_1$ coordinate, 
and we build a perfect binary tree $T$ on $P$ for this order, such
that the leaves of $T$ correspond to the points in $P$.
The height of $T$ is $O(\log m)$, and each node $\nu$ of $T$ 
corresponds to a subset $P_\nu$ of $P$, namely the leaves in the
subtree under $\nu$. Each point of $P$ appears 
in $O(\log m)$ subsets. Next, we determine the $z_2$-order 
in each subset $P_\nu$, and we build a binary tree $T_\nu$ on each 
$P_\nu$ for that order, as before. Again, for every node $\nu$ of $T$, 
each node $\mu$ of $T_\nu$ corresponds to a subset $P_{\nu\mu}$ 
of $P_\nu$. We call the subsets $P_{\nu\mu}$ the \emph{canonical sets},
see Figure~\ref{fig:datastructures}(left) for an example. 
Each point in $P_\nu$ appears in $O(\log m)$ subsets 
$P_{\nu\mu}$, so the total size of all canonical sets is
$O(m \log^2 m)$. For each canonical set $P_{\nu\mu}$, we build 
four elimination query structures as in Lemma~\ref{lem:cube1},
one for each quadrant. The total preprocessing time 
is $O(m \log^3 m)$, and the total space requirement is 
$O(m \log^2 m)$.

Now, to process a query $q$, we determine the four quadrants 
of $p_q$, and for each quadrant $Q$, we find the canonical
sets that constitute a partition of $P \cap Q$.
For this, we locate the $z_1$-coordinate of the vertical boundary
of $Q$ in $T$, and we take all the nodes that are left or right
children of the nodes on this search path, depending on whether $Q$
lies to the left or to the right of $p_q$ in the rotated coordinate
system. Then, we perform an 
analogous query in the second level tree for each such node, using
the $z_2$-coordinate of the horizontal boundary of $Q$.
This gives $O(\log^2 m)$ canonical sets that constitute 
a partition of $P \cap Q$. We do an elimination query
for each canonical set, and we return the element that gives
the minimum elimination time for $D_q$. We repeat this for all
four quadrants, and we find the element with
the minimum overall elimination time. Since we query 
$O(\log^2 m)$ canonical sets in total, this takes
$O(\log^3 m)$ time.
\end{proof}

As in Section~\ref{sec:bucketing}, we now apply 
Lemma~\ref{lem:cube2} together with the bucketing
technique. This time, however, the bucketing is done 
in a slightly different way:
we construct a perfect binary tree $T$ whose leaves represent
the squares $D_1, \dots, D_n$, in that order.
A node $\nu \in T$ represents the subset $B_\nu$ of disks that
consists of the leaves in the subtree rooted in $\nu$.

As soon as the elimination
times of all the disks $B_\nu$ associated with a node
$\nu$ of $T$ have been determined, we compute 
the elimination query structure from Lemma~\ref{lem:cube2} for 
$B_\nu$. Thus, when processing a disk $D_i$, the disks
$D_j$, $j < i$, can be partitioned into $O(\log n)$ 
buckets for which elimination query structures have been
constructed (at most one node per level in the tree), see
Figure~\ref{fig:datastructures}(right). Hence, 
we can find $t_i$ in $O(\log^4 n)$ time by querying all
these structures.
Since each disk apperas in $O(\log n)$ structures, 
the total preprocessing time is $O(n \log^4 n)$ and the
total space requirement is $O(n \log^3 n)$.
In higher dimensions, these bounds increase by a factor $O(\log n)$ 
per dimension, as we need one 
more level in the range tree. The following theorem summarizes our
result.

\begin{theorem}\label{thm:cubes}
The elimination sequence of a set of $n$ axis-aligned cubes in 
fixed dimension $d=O(1)$ can be computed in $O(n \log^{d + 2} n)$ time,
using $O(n \log^{d + 1} n)$ space.
\end{theorem}

\section{Quadtree-based approach}\label{sec:quad}
  
Algorithm~\ref{alg:first} can be improved further by noticing that 
in order to find out when a disk is eliminated, it suffices to check
only disks that are \emph{nearby}. Thus, we need a suitable data 
structure to maintain the proximity relations between the disks.
For this, we use a \emph{quadtree}. The nodes in the quadtree
allow us to approximate the disks by square cells and to efficiently
maintain subsets of nearby disks during the growth process. The main
drawback is that the complexity of the algorithm will depend on
the structure of the point set and not just on the number of points.
By compressing the quadtree, we can reduce this dependency, but not
completely eliminate it.
More precisely, let $\Phi$ denote the \emph{spread} of the disk centers and
$\Delta$ denote the ratio of the growth rates, i.e., 
\[
\Phi = \frac{\max_{1 \leq i < j \leq n}|p_ip_j|}
{ \min_{1 \leq i < j \leq n}|p_ip_j|}
\]
and 
\[
\Delta = \frac{\max_{i \in \{1, \dots, n\}} v_i}
{ \min_{j \in \{1, \dots, n\}} v_j}.
\]
We provide two algorithms:
the first algorithm uses a quadtree and runs in 
$O(n\log\Phi\min\{\log\Phi,\log \Delta\})$ time 
and $O(n \log \Phi)$ space.
The second  algorithm uses a compressed quadtree, and
it needs $O(n(\log n + \min\{\log\Phi, \log \Delta\}))$ time
and $O(n)$ space.
To simplify the notation, we 
set $\alpha=\min\{\log\Phi,\log \Delta\}$.

\subsection{A simple quadtree-based algorithm}\label{sec:uncomp}

\begin{figure}
\centering
\includegraphics[scale=1.2]{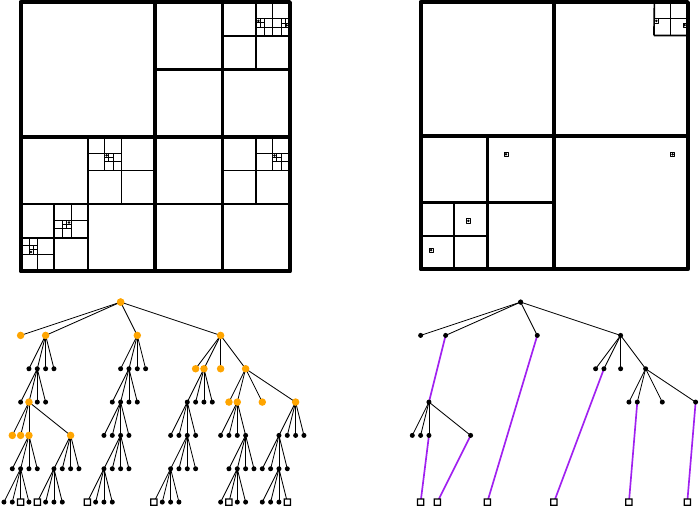}
\caption{A quadtree and its compressed version: (left) 
a quadtree for $6$ disk centers, where the subdivision process
stops once a cell contains at most one disk center and the diameter
of the cell becomes smaller than a quarter of
the smallest distance between disk centers. The nodes that
appear also in the compressed quadtree are marked; (right) the
compressed quadtree.
The compressed edges are shown in bold purple.}
\label{fig:cquadtree}
\end{figure}
We give a simple quadtree-based algorithm to compute the 
elimination sequence for a set of growing disks in the plane.
For this, we first construct a modified quadtree $\quadtree$ for the 
disk centers: suppose that the
maximum distance between two disk centers is $1$ and the minimum
distance is $1/\Phi$. Then, $\quadtree$ will have depth 
$O(\log \Phi)$, and each disk center will be in a leaf cell of 
diameter $O(1/\Phi)$. The growth process of a disk $D_i$ can 
be approximated by tracing the quadtree cells from the leaf cell
that contains $p_i$ to the root. It turns out that if two disks 
collide, then their respective cell approximations must be 
``close'' in the quadtree and of approximately the same size
(roughly up to a factor of $\Delta$). This means that for 
each approximate disk (i.e., quadtree cell), there are 
$O(\alpha)$ other approximate disks that could potentially eliminate it.
Thus, we reduce the number of disk pairs that need to 
be checked in the main loop of Algorithm~\ref{alg:first} from $O(n^2)$ to 
$O(n \alpha \log \Phi)$. 
Details follow.

Without loss of generality, we assume that all disk centers lie 
in the unit square $[0,1]^2$, and that the diameter of this point
set is $1$.  
We construct a \emph{quadtree} $\quadtree$ for the disk 
centers~\cite{MitchellMu17}, with a slight modification.
The quadtree is a rooted tree in which every internal node has four
children. Each node $\nu$ of $\quadtree$ has an associated square
\emph{cell} $b(\nu)$.
To obtain $\quadtree$, we recursively split the unit square. 
In each step, the current node $\nu$
is partitioned into four congruent quadrants (cells) if its corresponding
cell $b(\nu)$ contains one or more disk centers. 
We stop when each cell at the bottom level contains at most one disk 
center and the diameter of the cell becomes smaller than a quarter of
the smallest distance between two disk centers.
Note that in our quadtree, all leaf cells have the same size and
are small in relation to the smallest distance between two disk
centers; see Figure~\ref{fig:cquadtree}.
The quadtree $\quadtree$ has depth $O(\log \Phi)$, and it 
can be constructed in $O(n \log \Phi)$ time and
space. 

We introduce some notation.
For a node $\nu \in \quadtree$, we let
$p(\nu)$ be the parent node of $\nu$.
We denote by
$|\nu|$ the diameter of the cell $b(\nu)$.
For two nodes $\nu, \mu \in \quadtree$, we write $d(\nu, \mu)$
for the smallest distance between a point in $b(\nu)$ and 
a point in $b(\mu)$.
For a point $q$ and a node $\nu \in \quadtree$, we write $d(q, \nu)$
for the smallest distance between $q$ and a point in $b(\nu)$.
The next definitions show what it means to \emph{approximate}
a disk by a quadtree cell.  For $t \geq 0$, 
we let $D_i^t$ be the disk $D_i$ at time $t$.
We say that $D_i^t$ \emph{occupies} a node $\nu$ if 
\begin{enumerate}[label=(\roman*)]
\item the disk center $p_i$ lies in the cell for $\nu$, i.e., 
$p_i \in b(\nu)$;
\item $\nu$ is a leaf or $D_i^t$ covers the whole cell for $\nu$, i.e., 
$b(\nu) \subseteq D_i^t$; and
\item $D_i^t$ has not been eliminated before time $t$.
\end{enumerate}
We denote by $\nu(i, t)$ the node of the
largest cell of $\quadtree$ that is occupied by $D_i^t$.
We may think of $\nu(i, t)$ as an approximate representation
in $\quadtree$ of disk $D_i$ at time $t$. The next lemma shows
that if two disks meet, then their respective approximate
representations must be close and of approximately the same
size.

\begin{lemma}\label{lem:quad.def}
Let $i \in \{2, \dots, n\}$, and let $D_j$, $j \in \{1, \dots, i - 1\})$
be the disk that eliminates $D_i$, i.e., $t_i = t(i,j)$.
Then,
\[ 
d\left(\nu(i, t_i),\nu(j, t_i)\right) \leq 
  2\left(\left|\nu(i, t_i)\right| + 
  \left|\nu(j, t_i)\right|\right), 
\] 
and
\[ 
\frac{1}{4\Delta} \leq \frac{\left|\nu(i, t_i)\right|}{
\left|\nu(j, t_i)\right|} \leq 4\Delta. 
\]
\end{lemma}

\begin{proof}
We state three simple facts from the construction of $\quadtree$
and from the definition of $\nu(\cdot, \cdot)$:
\begin{enumerate}[label=(\roman*)]
\item all non-empty leaf cells have the
same diameter (this is how we construct $\quadtree$); 
\item for any $k \in \{1, \dots, n\}$ and $t > 0$, if 
$\nu(k, t)$ is not a leaf, then
$\left|\nu(k, t)\right| \leq 2v_kt$
($D_k^t$ has radius $v_kt$ and covers $\nu(k, t)$); and
\item for any $k \in \{1, \dots, n\}$ and $t \geq 0$, we have
$\left|\nu(k, t)\right| \geq v_kt/2$
($D_k^t$ has radius $v_kt$ and does not cover the parent of $\nu(k, t)$,
which has diameter $2\left|\nu(k, t)\right|$).
\end{enumerate}
\begin{figure}
\centering
\includegraphics{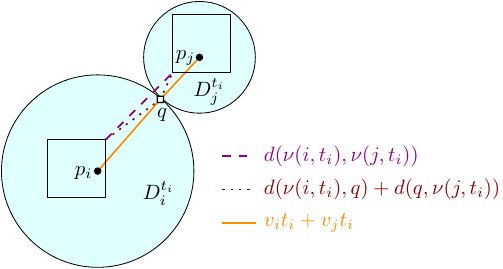}
\caption{When two disks meet, the two cells that
represent them are close.}
\label{fig:closesquares}
\end{figure}
\begin{table}
\centering
\includegraphics{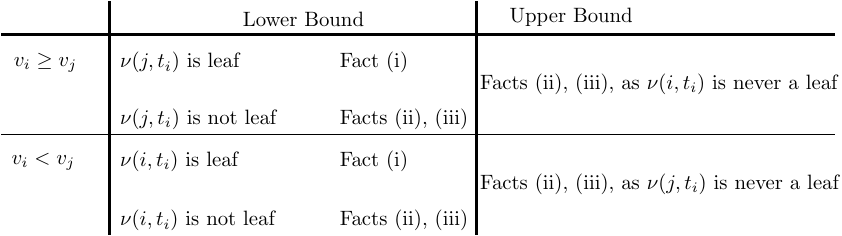}
\caption{Schematic overview for bounding the ratio of the 
  cell sizes.}
\label{fig:size_schematic}
\end{table}
For the first claim, on the distance between the cells,
let $q = \partial D_i^{t_i} \cap \partial D_j^{t_i}$ be the point
where $D_i$ and $D_j$ touch; see Figure~\ref{fig:closesquares}. 
By fact (iii), we have
$v_it_i \leq 2\left|\nu(i, t_i)\right|$
and
$v_jt_i \leq 2\left|\nu(j, t_i)\right|$.
Hence, it follows that
\[
d\left(\nu(i,t_i),\nu(j, t_i)\right)
\leq d\left(q, \nu(i,t_i)\right)+d\left(q, \nu(j, t_i)\right)
\leq v_it_i+v_jt_i
\leq 2\left|\nu(i, t_i)\right|+2\left|\nu(j, t_i)\right|,
\]
as claimed.  
Now we prove the second claim, on the ratio of the cell sizes;
see Table~\ref{fig:size_schematic} for an overview of the case 
distinction.  Suppose first that $v_i \geq v_j$.
If $\nu(j, t_i)$ is
a leaf, we have
$\left|\nu(i, t_i)\right|/
\left|\nu(j, t_i)\right| \geq 1$,
by fact (i).
If $\nu(j, t_i)$ is
not a leaf, it follows from
facts (ii) and (iii) that
\[
\frac{\left|\nu(i, t_i)\right|}
{\left|\nu(j, t_i)\right|}\geq \frac{v_it_i/2}{2v_jt_i}
\geq \frac{1}{4} \geq \frac{1}{4\Delta}.
  \]
By construction, the leaf cell that contains 
$p_i$ has diameter smaller than a quarter of
the smallest distance between disk centers.
Hence, 
the node $\nu(i, t_i)$ is not a leaf.
Thus, by facts (ii) and (iii), 
\[
\frac{\left|\nu(i, t_i)\right|}
{\left|\nu(j, t_i)\right|}\leq \frac{2v_it_i }{v_jt_i/2}
\leq 4 \cdot \frac{\max_i v_i}{\min_j v_j} = 4\Delta.
  \]
The argument for $v_j > v_i$ is analogous: 
if $\nu(i, t_i)$ is a leaf, then 
$\left|\nu(j, t_i)\right|/
\left|\nu(i, t_i)\right| \geq 1$, by fact (i).
If not, 
\[
\frac{\left|\nu(j, t_i)\right|}
{\left|\nu(i, t_i)\right|}\geq \frac{v_jt_i/2}{2v_it_i}
> \frac{1}{4} \geq \frac{1}{4\Delta},
  \]
by facts (ii) and (iii). Now, the node $\nu(j, t_i)$ cannot be a leaf,
so by facts (ii) and (iii)
\[
\frac{\left|\nu(j, t_i)\right|}
{\left|\nu(i, t_i)\right|}\leq \frac{2v_jt_i }{v_it_i/2}
\leq 4 \cdot \frac{\max_j v_j}{\min_i v_i} = 4\Delta.
  \]
The lemma follows.
\end{proof}

Lemma~\ref{lem:quad.def} gives a necessary condition
for the event that two quadtree cells could lead to
an elimination event. This allows us to focus on a
limited number of candidate pairs.
More formally, we say that two unrelated\footnote{That is,
no node is an ancestor or descendant of the other node.} nodes 
$\nu, \mu \in \quadtree$ form a \emph{candidate pair} 
if 
\[
d(\nu,\mu) \leq 2\left(|\nu|+|\mu|\right) \tag{*}
\]
and 
\[
\frac{|\nu|}{4\Delta} \leq |\mu|\leq 4\Delta|\nu|. \tag{**}
\]
We say that $\nu$ 
\emph{forms the candidate pair} $(\nu, \mu)$ with $\mu$.
We denote by $\cnp(\nu)$ the set of candidate pairs formed by
$\nu$. 
By Lemma~\ref{lem:quad.def}, to determine the elimination order, 
we need to consider only candidate pairs.
The following lemma uses a simple volume argument
to bound their number.
\begin{lemma}\label{lem:count_candidates}
Let $\nu \in \quadtree$.
Then, $\cnp(\nu)$ contains  $O(\alpha)$
candidate pairs $(\nu, \mu)$ with $|\nu| \leq |\mu|$.
The total number of candidate pairs is  $O(n \alpha \log \Phi)$.
\end{lemma}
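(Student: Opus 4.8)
The plan is to bound the two claims separately. For the first claim, fix a node $\nu \in \quadtree$ and consider a candidate pair $(\nu, \nu')$ with $|\nu| \leq |\nu'| \leq 4\Delta|\nu|$ and $d(\nu,\nu') \leq 2(|\nu|+|\nu'|)$. I would first observe that $|\nu'|$ ranges over only $O(\log\Delta)$ distinct ``levels'' of the quadtree, since the side lengths are powers of two and $|\nu'|/|\nu|$ lies in $[1, 4\Delta]$; this contributes the $\log\Delta$ factor in $\alpha$. Then, for a \emph{fixed} level $\ell$ with $|\nu'| = 2^\ell$ roughly in that range, the condition $d(\nu,\nu') \leq 2(|\nu|+|\nu'|) \leq 4|\nu'|$ forces the cell $b(\nu')$ to lie within distance $4|\nu'|$ of $b(\nu)$. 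A standard packing argument shows that there are only $O((|\nu|/|\nu'| + 1)^2) = O(1)$ cells of side $|\nu'| \geq |\nu|$ within that bounded neighborhood (here I use that cells at a fixed level partition the plane into a grid of side $|\nu'|$, and only a constant-radius ball of such grid cells can be close enough). Summing $O(1)$ per level over $O(\log\Delta)$ levels gives $O(\log\Delta)$ candidate pairs with $|\nu'| \geq |\nu|$. To get the $\min\{\log\Phi,\log\Delta\}$ bound, I note the quadtree has depth only $O(\log\Phi)$, so the number of admissible levels for $\nu'$ above $\nu$ is also trivially $O(\log\Phi)$; taking the minimum of the two bounds yields $O(\alpha)$.

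For the construction time, the plan is: for each node $\nu$, enumerate its $O(\alpha)$ potential partners directly. Concretely, for each of the $O(\alpha)$ relevant levels $\ell$, I would locate the $O(1)$ grid cells at level $\ell$ within distance $4\cdot 2^\ell$ of $b(\nu)$, check which ones actually correspond to existing nodes of $\quadtree$ (using a dictionary on cell coordinates, or by walking up/down from $\nu$ a bounded number of steps combined with the neighbor-finding machinery standard for quadtrees), verify the unrelatedness condition and the precise inequalities (i), (ii), and record the pair. Each node thus costs $O(\alpha)$ time (treating point-location / neighbor queries in the quadtree as $O(1)$ amortized, or $O(\log\Phi)$ worst case per step, which I would fold into the depth bound). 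Since $\quadtree$ has $O(n\log\Phi)$ nodes, the total is $O(n\alpha\log\Phi)$, matching the claim. I would also note that each candidate pair $(\nu,\nu')$ is discovered from both endpoints, so a constant-factor overcount is harmless, and that $\cnp(\nu)$ additionally contains the symmetric pairs with $|\nu'| < |\nu|$, which are counted when processing $\nu'$.

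The main obstacle I anticipate is the bookkeeping in the construction step: making precise how one efficiently finds, for a given node $\nu$ and a given coarser level $\ell$, exactly those nodes $\nu'$ of $\quadtree$ at that level whose cells are geometrically close to $b(\nu)$, without spending more than $O(1)$ time per (node, level) pair. The clean way is to exploit the fact that $\quadtree$ is built so that every nonempty leaf is surrounded by two layers of empty cells; this balancing-type condition means that the coarser cells near $\nu$ differ from $b(\nu)$ by only a bounded number of refinement levels within a bounded spatial neighborhood, so the relevant $\nu'$ can be reached from $\nu$ by a bounded walk in $\quadtree$ (ancestors and their nearby descendants). I would phrase the packing lemma as an explicit sublemma (``at most $c$ cells of a level-$\ell$ grid meet the disk of radius $r$ around a level-$\ell'$ cell, where $c$ depends only on $r/2^{\max(\ell,\ell')}$''), prove it with an area comparison, and then the rest is routine.
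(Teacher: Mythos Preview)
Your proposal is correct and follows essentially the same approach as the paper: a packing argument giving $O(1)$ candidate partners per level of $\quadtree$, combined with the observation that the relevant levels span a range of size $O(\alpha)$, and then summing over the $O(n\log\Phi)$ nodes. For the construction, the paper is terser than you are---it simply says one maintains pointers between neighboring cells while building $\quadtree$ and traverses via those pointers---which is exactly the ``neighbor-finding machinery'' you invoke, so your anticipated obstacle is handled the same way.
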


\begin{proof}
Fix a node $\nu \in \quadtree$.
We claim that in each level of $\quadtree$, there are
at most $O(1)$ nodes $\mu$ with
$|\nu| \leq |\mu|$ such that $(\nu, \mu)$ is a candidate pair. 
This follows from a simple volume argument: fix a level $i$ of $\quadtree$, 
and let $X$ be the set of all nodes $\mu$ at level $i$ with 
$|\nu| \leq |\mu|$ and $(\nu, \mu) \in \cnp(\nu)$.
By (*) and our assumption $|\nu| \leq |\mu|$, we have  for all
$\mu \in X$ that
\[
d(\nu, \mu) \leq 2(|\nu| + |\mu|) \leq 4|\mu|.
\]
Hence, all cells $\mu \in X$ lie in
a region of diameter at most $9|\mu|$.
Since these cells have pairwise disjoint interiors and the same
size $|\mu|$, it follows that $|X| = O(1)$, as claimed.

Furthermore, by the definition of $\Phi$ and by (**), we have
$|\mu| =O(\min\{\Phi,\Delta\} )|\nu|$.
This implies that the levels of $\nu$ and $\mu$ in $\quadtree$ differ 
by $O(\alpha)$.
Hence, $\cnp(\nu)$ contains  $O(\alpha)$
candidate pairs $(\nu, \mu)$ with $|\nu| \leq |\mu|$.
Since $\quadtree$ has 
$O(n\log\Phi)$  nodes, and since the symmetry of (*) and
(**) shows that $(\nu, \mu) \in \cnp(\nu)$
if and only if $(\mu, \nu) \in \cnp(\mu)$, it follows that there are
$O(n\alpha\log\Phi)$ candidate pairs overall.
\end{proof}

Now we can describe our algorithm. As in Algorithm~\ref{alg:first},
we handle the disks in decreasing order of priority. 
Recall that by (\ref{equ:quadratic}), for each disk $D_i$, 
we need to find the disk $D_j$, $1 \leq j < i$, that first meets 
$D_i$ and is still alive. By Lemma~\ref{lem:quad.def}, it suffices
to focus on disks that correspond to candidate pairs.
Throughout the algorithm, 
we compute for each node $\mu$ in $\quadtree$ the
index $D(\mu)$ of the (at most one) disk that occupies it at some point 
in time.
When processing a disk $D_i$, we
start from the leaf node for $D_i$, and we 
simulate the growth of $D_i$ by tracing a path to the root.
As we follow the leaf-root path for $D_i$, we check all the
candidate pairs $(\nu, \mu) \in \cnp(\nu)$ for the current node $\nu$, 
to see if a node $\mu$ is occupied by a disk that could eliminate
$D_i$ (i.e., a disk of higher priority that is still alive when it meets
$D_i$). These disks have been computed in previous iterations
of the algorithm.
We continue along the leaf-root path for $D_i$ until it is clear
that $D_i$ has been eliminated before it can occupy the current node
$\nu$, setting the $D(\nu)$ variables accordingly.
The pseudocode in Algorithm~\ref{alg:Quadtree} provides the details.
Initially, we set  all $D(\nu) = \perp$.
We use $\tau(\nu, i)$ to denote the first time at 
which $b(\nu)$ is covered by the disk $D_i$. 
We will describe below how the sets $\cnp(\nu)$ can be computed efficiently.

\begin{algorithm}
  \caption{A simple quadtree-based algorithm\label{alg:Quadtree}}
  \begin{algorithmic}[1]
    \Function{EliminationOrder}{$p_1,\dots,p_n$, $v_1,\dots,v_n$}
      \State $\quadtree \gets$ ConstructQuadTree($p_1,\dots,p_n$)
      \label{line:build_qt}
      \State $D(\nu) \gets \perp$ for every node $\nu$ of $\quadtree$

      \For{$i \gets 1, \ldots ,n$} 
        \State $\nu \gets$ getLeaf($p_i$) 
        \State $t \gets \infty$ 
	\label{line:init_ti}
        \While{$\nu \neq \perp$ and $t \geq \tau(\nu,i)$}
	  \label{line:while}
          \State $D(\nu) \gets i$ 
	  \For{$(\nu, \mu)$ in $\cnp(\nu)$} 
	  \If{$D(\mu) \neq \perp$ and $t_{D(\mu)} \geq t(i,D(\mu))$}
	      \label{line:test_collision}
	      \State $t \gets \min\Big(t, t\big(i,D(\mu)\big)\Big)$
             \EndIf
          \EndFor
          \State $\nu \gets p(\nu)$
        \EndWhile
        \State $t_i \gets t$
      \EndFor
      \State $S \gets (D_1,\dots,D_n)$ \State Sort $S$ using key $t_i$
         for each disk $D_i$ \State \Return $S$
	 \label{line:sorting}
    \EndFunction
  \end{algorithmic}
\end{algorithm}

\begin{theorem}\label{thm:quadtree}
  The elimination sequence of $n$ growing disks can be computed in
  $O(n\alpha\log\Phi)$ time and $O(n \log \Phi)$ space, 
  where $\alpha=\min\{\log\Phi, \log\Delta\}$.
\end{theorem}
\begin{proof}
The outer \textbf{for}-loop iterates over the input disks by decreasing
order of priority. In the \textbf{while}-loop, the algorithm
traverses each node $\nu \in \quadtree$ from the
leaf-node with $p_i$ to the root. It updates
$D(\nu)$, until it encounters a node $\nu$ with
$t < \tau(\nu,i)$. The inner \textbf{for}-loop iterates over
every candidate pair $(\nu, \mu)$ in $\cnp(\nu)$.
It checks if disk $D_i$ and $D(\mu)$
might touch by computing the time $t(i, D(\mu))$;
if so, it updates the tentative elimination time for $D_i$.
To show correctness, we prove that the algorithm maintains the
following invariant: after $i$ iterations of the \textbf{for}-loop,
the algorithm has correctly computed the elimination times 
$t_1, \dots, t_i$ for $D_1, \dots, D_i$. 
Furthermore, for each node $\nu$ of $\quadtree$, we have
$D(\nu) \in \{1, \dots, i, \perp\}$, and if there is a point 
in time when $\nu$ is occupied by the disk $D_j$, $1 \leq j \leq i$, 
then $D(\nu) = j$. 

The invariant holds after the first 
iteration. This is because $t$ is set to $\infty$
in Line~\ref{line:init_ti}, and all $D(\mu)$'s are initalized to
$\perp$, so that the \textbf{while}-loop will proceed all the way
to the root and set all $D(\nu)$'s along the leaf-root path
for $p_1$ to $1$.
Now suppose that $i \geq 2$ and 
that $D_i$ is eliminated by the disk $D_j$, with $1 \leq j < i$.
By Lemma~\ref{lem:quad.def}, the pair
$(\nu, \mu) = (\nu(i, t_i), \nu(j, t_i))$ is a candidate pair
in $\cnp(\nu)$. Furthermore,
$\nu$ lies on the leaf-root path for $p_i$ and by the
inductive hypothesis, we have $D(\mu) = j$.
Again by the inductive hypothesis and by the test 
$t_{D(\mu)} \geq t(i, D(\mu))$ in Line~\ref{line:test_collision}, 
we have $t \geq t_i$ throughout the \textbf{while}-loop.
Thus, the \textbf{while}-loop will visit $\nu$ and 
the candidate pair $(\nu, \mu)$ will be considered.
At this point, Algorithm~\ref{alg:Quadtree} will detect
the elimination event and set $t$ to $t_i$.
After that, the test $t \geq \tau(\nu, i)$ in Line~\ref{line:while} 
ensures that the remaining variables $D(\nu)$ are set correctly.
In particular, the algorithm does not overwrite any
other such values from previous iterations. Thus, the invariant is 
maintained.

We now turn to the running time.
We can compute $\quadtree$ in Line~\ref{line:build_qt} 
in $O(n\log\Phi)$ time and space.
Furthermore, by Lemma~\ref{lem:count_candidates}, there are
$O(n \alpha \log \Phi)$ candidate pairs overall, so that the total time
for the \textbf{for}-loop, excluding the time for finding the candidate
pairs, is $O(n \alpha \log \Phi)$. We find the candidate pairs
without additional asymptotic overhead as follows:
to determine the leaf for $p_i$, we walk down from the root,
going from one node that contains $p_i$ to the next. 
Suppose that we are currently at level $\ell$, and that
the node $\nu$ contains $p_i$ in $b(\nu)$. We find all nodes $\mu$ at
level $\ell$ with $d(\nu, \mu) \leq 2(|\nu| + |\mu|)$, and we
store them in a list for level $\ell$. For this, we can use
appropriate pointers in $\quadtree$, or we perform an appropriate
root-leaf traversal of $\quadtree$ that keeps track of all nodes 
close to the nodes containing $p_i$. By a simple volume argument, 
there are $O(1)$ such nodes at each level, for a total of $O(\log \Phi)$.
To find $\cnp(\nu)$ for a node $\nu$, we consider the list for the
parent of $\nu$ that is $O(\alpha)$ levels above $\nu$ (including the parent),
and we perform a depth-first search in $\quadtree$ 
of these nodes and their descendants for all candidate pairs.
This takes $O(\alpha + |\cnp(\nu)|)$ time,
and it can be implemented so that the space 
requirement does not exceed $O(n \log \Phi)$, since we can process the
candidate pairs immediately as we discover them.
Finally, since $\Phi = \Omega(\sqrt{n})$, the sorting step in 
Line~\ref{line:sorting}
does not increase the asymptotic running time or space.\footnote{By a packing 
argument, 
the spread of any $d$-dimensional $n$-point set is 
$\Omega(n^{1/d})$: if any two points have distance at least $1$,
the point set must cover at least $\Omega(n)$ units of
volume and hence must have diameter $\Omega(n^{1/d})$.}
\end{proof}

\subsection{Using a compressed quadtree}\label{sec:comp}
  
We now speed up
Algorithm~\ref{alg:Quadtree} with the help of 
a \emph{compressed quadtree} $\cquadtree$. In $\cquadtree$, 
the number of nodes is reduced to $O(n)$ by replacing certain long paths 
in the uncompressed quadtree $\quadtree$ with single \emph{compressed} 
edges. Now, the definition of the candidate pairs 
becomes more tricky, because the cells 
identified in Lemma~\ref{lem:quad.def} might no longer 
be present after the compression.
Thus, we need a way to map candidate pairs in
$\quadtree$ to candidate pairs in $\cquadtree$. 
It is important that no elimination event
is missed and that the number of candidate pairs
is small. To achieve this we project a 
candidate pair $(\nu, \mu)$ in $\quadtree$ to the pair
given by the lowest ancestors of $\nu$ and of $\mu$ that 
appear in $\cquadtree$. With these
\emph{compressed candidate pairs}, we can essentially
run Algorithm~\ref{alg:Quadtree} on $\cquadtree$,
with minor modifications. To complete our
result, we must show that the compressed candidate pairs
are few and can be found efficiently, and
that no elimination event is missed.
Details follow.

We begin with the formal definition of the compressed
quadtree.
Let $\quadtree$ be the (uncompressed) quadtree for
the $n$ disk centers, as in Section~\ref{sec:uncomp}.
We describe how to obtain the compressed quadtree $\cquadtree$
from  $\quadtree$.
A node $\nu$ in $\quadtree$ is 
\emph{empty} if $b(\nu)$ does not contain a disk-center,
and \emph{non-empty} otherwise. A \emph{singular path} $\sigma$ in 
$\quadtree$ is a path $\nu_1, \nu_2, \dots, \nu_k$ of nodes such
that (i) $\nu_k$ is a non-empty leaf or has at 
least two non-empty children; and
(ii) for $i = 1, \dots, k-1$, the node $\nu_{i+1}$
is the only non-empty child of $\nu_i$.
We call $\sigma$ \emph{maximal} if it cannot be extended
by the parent of $\nu_1$ (either because $\nu_1$ is the
root or because $p(\nu_1)$ has two non-empty children).
For each maximal singular path 
$\sigma = \nu_1, \dots, \nu_k$ in $\quadtree$, we remove from $\quadtree$
all proper descendants of $\nu_1$  that are not descendants of $\nu_k$,
together with their incident edges. Then, we add a new 
\emph{compressed edge} between $\nu_1$ and $\nu_k$.
The resulting tree $\cquadtree$ has $O(n)$ nodes. Each internal node has
one or four children.\footnote{According to our definition, 
the compressed quadtree may contain empty leaves, namely 
empty leaves that are children of nodes with at least two non-empty 
children. This empty leaves do not belong to any 
singular path.}
See Figure~\ref{fig:cquadtree} (right) for an illustration.
There are algorithms to compute $\cquadtree$ in 
$O(n\log n)$ time and $O(n)$ space (see, e.g. Har-Peled's
book~\cite[Theorem~2.9]{Sariel}
or Buchin~\etal for a version that does not need the
floor function~\cite[Appendix~A]{BuchinLoMoMu11}). 

We now describe how to map the candidate pairs from $\quadtree$
to candidate pairs in $\cquadtree$.  A node $\nu$ from
$\quadtree$ may appear as a node in $\cquadtree$ or not. We 
let $\pi(\nu)$ be the lowest ancestor node 
 of $\nu$ (including $\nu$) in $\quadtree$ that appears also in
$\cquadtree$. We call $\pi(\nu)$ the \emph{upward projection} 
of $\nu$ in $\cquadtree$.
For a node $\nu$ that appears in $\cquadtree$,
we denote by $\Pi(\nu) = \{ \nu' \in \quadtree \mid \pi(\nu') = \nu\}$
the set of all nodes in $\quadtree$ that project to $\nu$. If
$\nu$ is a leaf or has four children in $\cquadtree$, then
$\Pi(\nu) = \{\nu\}$. Otherwise, if $\nu$ has one child
in $\cquadtree$, then $\Pi(\nu)$ contains the nodes of the
maximal singular path that starts in $\nu$, without the last
vertex. Furthermore, for $\nu \in \cquadtree$, we write 
$p_C(\nu)$ the parent of $\nu$ in $\cquadtree$ (we set
$p_C(\nu) = \perp$, if $\nu$ is the root).
We define the set of \emph{compressed candidate pairs}
$\cnp_C(\nu)$ for $\nu$ in $\cquadtree$ as 
\[
  \cnp_C(\nu) = \big\{(\pi(\nu'), \pi(\mu)) \mid 
  \nu' \in \Pi(\nu) \text{ and } (\nu', \mu) \in \cnp(\nu') 
  \text{ and } \pi(\nu') \neq \pi(\mu)
  \big\}.
  \tag{***}
\]
In other words,  we obtain the compressed candidate pairs
for a node $\nu$ in $\cquadtree$ by taking the upward
projections of all candidate pairs in $\quadtree$ 
where the first component projects to $\nu$.

We can now describe our modified algorithm, 
Algorithm~\ref{alg:compressed}. Essentially, 
Algorithm~\ref{alg:compressed} works in the same 
way as Algorithm~\ref{alg:Quadtree}: we go through 
the disks by decreasing order of priority, and for 
each disk $D_i$, we simulate the growth process of $D_i$
by walking along the leaf-root path for the disk 
center $p_i$ in the compressed quadtree $\cquadtree$,
while checking for elimination events with nearby 
disks of higher priority. There are two 
differences: first, instead of the  candidate
pairs $\cnp(\nu)$ of the current node $\nu$, we now use the 
compressed candidate pairs $\cnp_C(\nu)$ to check for 
elimination events; second, the termination condition of 
the \textbf{while}-loop (line~\ref{line:while_compressed})
is modified: instead of comparing $t$ with the 
time $\tau(\nu, i)$ when $D_i$ first covers $b(\nu)$,
we use $\tau_C(\nu, i)$, the time when 
$D_i$ first covers the box for any node in $\Pi(\nu)$. 
This ensures that we do not miss an elimination event on a 
singular path starting at $\nu$.
Next, we argue that Algorithm~\ref{alg:compressed} correctly 
computes the elimination sequence. Then, we will discuss 
an efficient implementation. The correctness for
Algorithm~\ref{alg:compressed} follows from essentially the
same argument as for Algorithm~\ref{alg:Quadtree}. 

\begin{algorithm}
  \caption{Using the Compressed Quadtree\label{alg:compressed}}
  \begin{algorithmic}[1]
    \Function{EliminationOrder}{$p_1,\dots,p_n$, $v_1,\dots,v_n$}
      \State $\cquadtree \gets$ ConstructCompressedQuadTree($p_1,\dots,p_n$)
      \label{line:build_cqt}
      \State $D_C(\nu) \gets \perp$ for every node $\nu$ of $\quadtree$

      \For{$i \gets 1, \ldots ,n$} 
        \State $\nu \gets$ getCompressedLeaf($p_i$) 
        \State $t \gets \infty$ 
	\label{line:init_ti_compressed}
        \While{$\nu \neq \perp$ and $t \geq \tau_C(\nu,i)$}
        \label{line:while_compressed}
          \State $D_C(\nu) \gets i$ 
	  \For{$(\nu, \mu)$ in $\cnp_C(\nu)$} 
	  \label{line:cnpc}
	  \If{$D_C(\mu) \neq \perp$ and $t_{D_C(\mu)} \geq t(i, D_C(\mu))$}
	      \label{line:test_collision_compressed}
	      \State $t \gets \min\Big(t, t\big(i,D_C(\mu)\big)\Big)$
             \EndIf
          \EndFor
          \State $\nu \gets p_C(\nu)$
        \EndWhile
        \State $t_i \gets t$
      \EndFor
      \State $S \gets (D_1,\dots,D_n)$ \State Sort $S$ using key $t_i$
         for each disk $D_i$ \State \Return $S$
	 \label{line:sorting_compressed}
    \EndFunction
  \end{algorithmic}
\end{algorithm}

\begin{lemma}\label{lem:compressed_correctness}
Algorithm~\ref{alg:compressed} correctly computes the elimination 
sequence.
\end{lemma}

\begin{proof}
We prove that Algorithm~\ref{alg:compressed} maintains the following 
invariant: after $i$ iterations of the \textbf{for}-loop,
we have the correct elimination 
times $t_1, \dots, t_i$ for $D_1, \dots, D_i$. Furthermore,
for each node $\nu$ of $\cquadtree$, we have
$D_C(\nu) \in \{1, \dots, i, \perp\}$, and
if there is a point in time when a node in $\Pi(\nu)$ is
covered by the disk $D_j$, $1 \leq j \leq i$, then
$D_C(\nu) = j$ (this is well defined, because the nodes
along a singular path can be occupied by at most one
disk).

The invariant holds for $i = 1$, because $t$ is set to 
$\infty$ in Line~\ref{line:init_ti_compressed} and for all the 
nodes $\nu$ in $\cquadtree$, the value $D_C(\nu)$ is initialized 
to $\perp$, so that no elimination event will be detected in 
Line~\ref{line:test_collision_compressed}. The values $D_C(\nu)$ 
on the leaf-root path of $p_1$ are all set to $1$, as desired.
Now, suppose that $i \geq 2$ and that $D_i$ is eliminated
by the disk $D_j$, with $1 \leq j < i$.
By Lemma~\ref{lem:quad.def}, the pair
$(\nu, \mu) = (\nu(i, t_i), \nu(j, t_i))$ is a candidate pair
in $\cnp(\nu)$. Furthermore,
$\nu$ lies on the leaf-root path for $p_i$ in $\quadtree$.
By~(***), it follows that $(\pi(\nu), \pi(\mu))$
is a compressed candidate pair in $\cnp(\pi(\nu))$ 
and that $\pi(\nu)$ lies on the leaf-root
path for $p_i$ in $\cquadtree$.
By the inductive hypothesis, we have $D_C(\pi(\mu)) = j$.
Again by the inductive hypothesis and by the test 
$t_{D_C(\mu)} \geq t(i, D_C(\mu))$ in 
Line~\ref{line:test_collision_compressed}, 
we have $t \geq t_i$ throughout the \textbf{while}-loop.
Thus, the \textbf{while}-loop will visit $\pi(\nu)$ and 
consider the compressed candidate pair $(\pi(\nu), \pi(\mu))$.
At this point, Algorithm~\ref{alg:compressed} will detect
the elimination event and set $t$ to $t_i$.
After that, the test $t \geq \tau_C(\nu, i)$ in 
Line~\ref{line:while_compressed} 
ensures that the remaining variables $D_C(\nu)$ are set correctly.
In particular, the algorithm does not overwrite any
other such values from previous iterations. Thus, the invariant is 
maintained and the correctness of Algorithm~\ref{alg:compressed} follows.
\end{proof}

\begin{figure}
\centering
\includegraphics[scale=1.2]{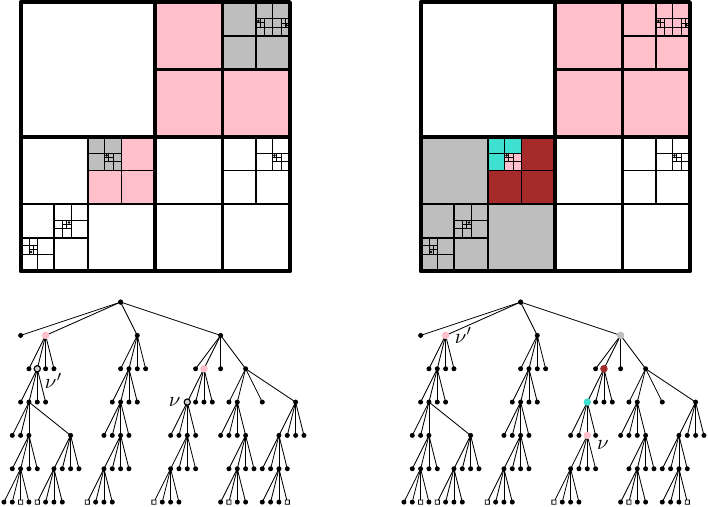}
\caption{(left) If two nodes $\nu$ and $\nu'$ form a candidate pair,
so do their parents. (right) If $|\nu| \leq |\nu'|$ and 
$\nu$ forms a candidate pair with $\nu'$, then so do all
ancestors of $\nu$ up to the level of $\nu'$.
}
\label{fig:upward_closure}
\end{figure}
Next, we discuss how to implement Algorithm~\ref{alg:compressed}
efficiently. First we must bound the number of compressed 
candidate pairs.
For this, we need two closure properties for candidate pairs
in $\quadtree$: if two nodes in
$\quadtree$ form a candidate pair, then so
do their parents; and if two nodes $(\nu, \mu)$ form a 
candidate pair in which $\nu$ lies at a lower level than $\mu$,
then all the ancestors of $\nu$ up to the level of $\mu$ also
form candidate pairs with $\mu$. See Figure~\ref{fig:upward_closure}
for an illustration.
\begin{lemma}\label{lem:cnp.parent}
Let $\nu, \mu$ be nodes of $\quadtree$ such that 
$(\nu, \mu) \in \cnp(\nu)$ and $p(\nu) \neq p(\mu)$.
Then, we have
\begin{enumerate}[label=(\roman*)]
\item $(p(\nu), p(\mu)) \in \cnp (p(\nu))$, i.e., the parents
of $\nu$ and $\mu$ also form a candidate pair; and
\item if $|\nu| \le |\mu|$, then 
$(\nu', \mu) \in \cnp(\nu')$
for any ancestor $\nu'$ of $\nu$ with $|\nu'| \le |\mu|$,
i.e., all the ancestors of $\nu$ up to the level of $\mu$ also form
candidate pairs with $\mu$.
\end{enumerate}
\end{lemma}

\begin{proof}
We must check properties (*) and (**) of a 
candidate pair.
For~(i), property (*) holds because
\[
d(p(\nu), p(\mu)) \leq d(\nu, \mu) \leq 2(|\nu|+|\mu|) 
\leq 2(|p(\nu)|+p(|\mu|)),
\] 
where in the first and third inequality we used the fact that
$b(\nu) \subset b(p(\nu))$ and $b(\mu) \subset b(p(\mu))$ and
in the second inequality we used that $(\nu, \mu)$ is a 
candidate pair. Property (**) holds because
\[
\frac{|p(\mu)|}{|p(\nu)|} = 
\frac{|\mu|/2}{|\nu|/2} =
\frac{|\mu|}{|\nu|} \in \left[\frac{1}{4\Delta}, 4\Delta\right],
\]
since $(\nu, \mu)$ is a candidate pair.
The argument for (ii) is analogous. Since $b(\nu)$ is a subset
of $b(\nu')$ and since $(\nu, \mu)$ is a candidate pair, we have 
\[
d(\nu', \mu)
\leq d(\nu, \mu)
\leq 2(|\nu|+|\mu|) 
\leq 2(|\nu'|+|\mu|),
\]
so property (*) holds. Furthermore, since $|\nu| \leq |\nu'| \leq |\mu|$
and since $(\nu, \mu)$ is a candidate pair, we have
\[
\frac{1}{4\Delta} \leq
1 \leq \frac{|\mu|}{|\nu'|} \leq \frac{|\mu|}{|\nu|} \leq 4\Delta,
\]
which shows property (**).
\end{proof}

\begin{figure}
\centering
\includegraphics[scale=1.2]{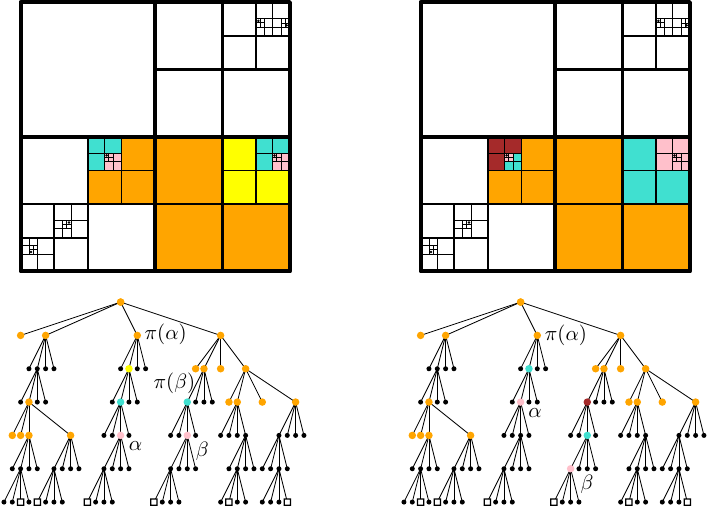}
\caption{Compressed candidate pairs
can be charged to regular candidate pairs
in $\cquadtree$ where the second node is larger: 
apply 
Lemma~\ref{lem:cnp.parent}(i)  until
reaching a node in $\cquadtree$;
(left) If this is the smaller node, 
the projection of the other node stays the same;
(right) if it is the larger node, we
apply Lemma~\ref{lem:cnp.parent}(ii).
}
\label{fig:projection_lemma}
\end{figure}
The next lemma provides a way to charge compressed candidate pairs
in $\cquadtree$ to candidate pairs in $\quadtree$.
More precisely, we show that each compressed candidate pair
$(\nu, \mu)$ in $\cquadtree$ can be obtained by taking
a candidate pair $(\sigma, \tau) \in \cnp(\nu) \cup \cnp(\mu)$ with
$|\sigma| \leq |\tau|$, and by projecting $\tau$ upwards;
see Figure~\ref{fig:projection_lemma}.

\begin{lemma} \label{lem:charging}
Let $\nu, \mu$ be two nodes in $\cquadtree$ 
with $(\nu, \mu) \in \cnp_C(\nu)$. Then, there
are two nodes $\sigma, \tau$ in $\quadtree$ 
such that (i) $(\sigma, \tau) \in \cnp(\sigma)$, i.e., $\sigma$ and $\tau$
form a candidate pair in $\quadtree$;
(ii) $|\sigma| \leq |\tau|$, i.e., $\sigma$ is not larger than
$\tau$; and (iii) $(\nu, \mu) = (\sigma, \pi(\tau))$ 
or $(\nu, \mu) = (\pi(\tau), \sigma)$, i.e., the compressed candidate
pair $(\nu, \mu)$ is obtained by taking a candidate pair for
$\nu$ or for  $\mu$ and by projecting the other component upwards.
\end{lemma}

\begin{proof}
Since $(\nu, \mu) \in \cnp_C(\nu)$, definition
(***) implies that there are two nodes
$\alpha$, $\beta$ in $\quadtree$ where
(i) $\alpha$ and $\beta$ form a candidate pair, i.e.,
$(\alpha, \beta) \in \cnp(\alpha)$;
(ii) the notation is
such that $\alpha$ is not larger than $\beta$, i.e.,
$|\alpha| \leq |\beta|$; 
and
(iii) the upward projections of $\alpha$ and $\beta$ 
(in the right order) give the compressed candidate pair $(\nu, \mu)$,
i.e.,
$\{\nu, \mu\} = \{\pi(\alpha), \pi(\beta)\}$.
We repeatedly apply 
Lemma~\ref{lem:cnp.parent}(i) to $(\alpha, \beta)$, until we
meet $\pi(\alpha)$ or $\pi(\beta)$, whichever happens first. By assumption,
$\pi(\alpha)$ and $\pi(\beta)$ are distinct, so all the parents
along the way are also distinct, and Lemma~\ref{lem:cnp.parent}(i)
is applicable.
See Figure~\ref{fig:projection_lemma} for an illustration.

Suppose we meet $\pi(\alpha)$ first.
In this case, we set $\sigma = \pi(\alpha)$.
Then, it follows that $(\sigma, \tau) \in \cnp(\sigma)$ 
for some ancestor $\tau$ of $\beta$ in $\quadtree$ with 
$|\sigma| \leq |\tau|$. Since $\sigma = \pi(\alpha)$ is 
encountered first, the upward projections of
$\tau$ and $\beta$ in $\cquadtree$ are the same, i.e.,
$\pi(\tau) = \pi(\beta)$. Hence, the pair $(\sigma, \tau)$
has all the desired properties.

Second, suppose we meet $\pi(\beta)$ first.
Consider the ancestor $\alpha'$ of $\alpha$ that has
the same size as $\pi(\beta)$.
If $\pi(\alpha)$ appears on the path from $\alpha$ to $\alpha'$
in $\quadtree$, we set $\sigma = \pi(\alpha)$ and $\tau = \pi(\beta)$. 
Otherwise,
we set $\sigma = \pi(\beta)$ and $\tau = \alpha'$.
In either case, $|\sigma| \leq |\tau|$. Furthermore,
by Lemma~\ref{lem:cnp.parent}(ii), we have
$(\sigma, \tau) \in \cnp(\sigma)$. Finally, the upward
projections are maintained.
Hence, $(\sigma, \tau)$ again has all the desired 
properties.
\end{proof}

Now, we can bound the number of compressed candidate pairs with
a simple charging argument.
\begin{lemma}\label{lem:bound_cands}
The total number
of compressed candidate pairs is $O(n\alpha)$.
\end{lemma}

\begin{proof}
Let $(\nu, \mu)$ be a compressed candidate pair.
We use Lemma~\ref{lem:charging} to charge
$(\nu, \mu)$ to a pair 
$(\sigma, \tau) \in \cnp(\nu) \cup \cnp(\mu)$
with $|\sigma| \leq |\tau|$ and $\{\pi(\sigma), \pi(\tau)\} = \{\nu, \mu\}$.

Now, let $\nu$ be a node of $\cquadtree$. Every candidate pair
$(\nu, \mu) \in  \cnp(\nu)$ with $|\nu| \leq |\mu|$ is charged
at most twice, namely (potentially) by $(\nu, \pi(\mu))$ and by 
$(\pi(\mu), \nu)$. By Lemma~\ref{lem:count_candidates}, there are 
$O(\alpha)$ candidate pairs $(\nu, \mu) \in  \cnp(\nu)$ with 
$|\nu| \leq |\mu|$. 
Since $\cquadtree$ has $O(n)$ nodes, the claim follows.
\end{proof}

Now, we have enough tools to find all the compressed candidates
in Line~\ref{line:cnpc} efficiently.
\begin{lemma}\label{lem:count_candidate_C}
During Algorithm~\ref{alg:compressed},
we can enumerate all compressed candidates $\cnp_C(\nu)$ for the 
nodes $\nu$ visited by the algorithm in 
total time $O(n \alpha)$ and space $O(n)$.
\end{lemma}

\begin{proof}
During preprocessing, we compute for each node $\nu$ in $\cquadtree$
\emph{neighbor pointers} 
to all nodes $\mu \in \cquadtree$ where $\Pi(\mu)$ contains
a node $\mu'$ with $|\nu| = |\mu'|$ and $d(\nu, \mu') \leq 4|\nu|$.
For each $\nu$, there are $O(1)$ such pointers (a simple volume
argument), and they can be found
in $O(n)$ time and space by a top-down traversal.

Now suppose we want to enumerate the compressed candidate pairs for a
node $\nu$ in $\cquadtree$. Let $(\nu, \mu) \in \cnp(\nu)$ be such
a candidate. By Lemma~\ref{lem:charging}, we have either (i)
$(\nu, \mu) = (\nu, \pi(\mu'))$ for a candidate pair 
$(\nu, \mu') \in \cnp(\nu))$ with $|\nu| \leq |\mu'|$ or 
(ii) $(\nu, \mu) = (\pi(\nu'), \mu)$,
where $\mu$ appears in $\cquadtree$ and $(\mu, \nu')$ is
a candidate pair in $\quadtree$ with $|\mu| \leq |\nu'|$.

To find the compressed candidate pairs of type (i), we 
enumerate all (regular) candidate pairs for $\nu$, using 
the neighbor pointers for $\nu$ and its ancestors in a similar procedure as in 
Theorem~\ref{thm:quadtree}.
By Lemma~\ref{lem:count_candidates}, 
this takes $O(\alpha)$ time and no additional
space if we process the compressed candidate pairs immediately without 
storing them. For compressed candidate pairs of type (ii), we 
must enumerate all 
nodes $\mu \in \cquadtree$
such that $\Pi(\nu)$ contains
a node $\nu'$ with $(\mu, \nu') \in \cnp(\mu)$ and $|\mu| \leq |\nu'|$.
The crucial observation is that by Lemma~\ref{lem:cnp.parent},
these  nodes form a connected subtree under each neighbor 
node of $\nu$. Thus, we can find them by following the neighbor
pointers for $\nu$ and by traversing each such subtree as long
as a compressed candidate pair is found. We can check whether
a node $\mu \in \cquadtree$ forms a compressed candidate pair with
$\nu$ in $O(1)$ by elementary calculations involving
the floor function.\footnote{We can also do without the floor function
if we slightly relax the notion of a compressed candidate pair.}
Thus, the time is proportional to the number of
distinct compressed candidate pairs that are discovered. No additional space
is necessary, because the compressed candidate pairs can be
processed immediately. Since each node in $\cquadtree$
is visited at most once by Algorithm~\ref{alg:compressed}, the
result now follows from Lemma~\ref{lem:bound_cands}.
\end{proof}

The following theorem summarizes our result for this section.
\begin{theorem} The elimination sequence of $n$
disks can be computed in $O(n\log n + n \alpha)$ time 
and $O(n)$ space, where $\alpha=\min\{\log\Phi, \log\Delta\}$.
\end{theorem}

\begin{proof}
By Lemma~\ref{lem:compressed_correctness}, Algorithm~\ref{alg:compressed}
correctly computes the elimination sequence. The compressed quadtree
$\cquadtree$ can be constructed in $O(n \log n)$ time and $O(n)$
space. This is also the time needed for the final sorting step.
Since Algorithm~\ref{alg:compressed} visits each node of $\cquadtree$
at most once, and since $\cquadtree$ has $O(n)$ nodes, the time
for the \textbf{for}-loop (without the time for computing the
compressed candidates) is $O(n \alpha)$, by Lemma~\ref{lem:bound_cands}.
It uses no additional space.
Finally, by Lemma~\ref{lem:count_candidate_C}, the additional time
for enumerating the compressed candidate pairs is $O(n \alpha)$,
using $O(n)$ space. The result follows.
\end{proof}

\section{Lower bound}

To complement our results, we provide a lower bound for finding
elimination sequences in the algebraic decision tree model.
Formann~\cite{Formann93} argued that the weighted-closest pair
problem has an $\Omega(n \log n)$ lower bound in the
algebraic decision tree model, by a reduction from
the closest-pair-problem. Since finding the elimination order
is a more general problem, this also implies an $\Omega(n \log n)$
lower bound for our problem. Here, we provide a slightly
stronger result by showing that the \emph{sorting problem}
reduces to finding elimination orders.

\begin{theorem}\label{thm:lowerbound}
There is a reduction from the sorting problem to the
elimination order problem.
In particular, it takes at least $\Omega(n \log n)$ time to 
find the elimination order of a set of $n$ 
growing disks or squares in the plane under the algebraic decision tree model. 
\end{theorem}

\begin{figure}
\centering
\includegraphics{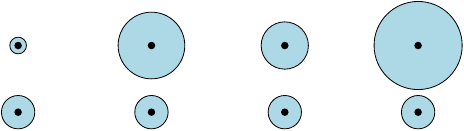}
\caption{The lower bound reduction: the disk centers
are arranged in two rows, where two consecutive
disk centers have distance $2$. The distance between two
rows is $1$ (actually, the points are slightly perturbed 
for general position; this is omitted in the figure). 
The disks centered at the bottom row grow
uniformly with rate $1$. The disks in the top row grow
at varying speeds in $(1,2)$. The elimination order of the disks
in the top row corresponds to the reverse sorted order of their
growth rates.}
\label{fig:lower_bound}
\end{figure}

\begin{proof}
We show that the problem of sorting $n$ numbers 
$v_{n+1}, \dots, v_{2n} \in (1, 2)$ can be reduced to finding the
elimination order of $2n$ disks in the plane.  
This implies an $\Omega(n \log n)$ lower bound in the algebraic
decision tree model.  

Our reduction proceeds as follows:
Suppose we are given $n$ numbers 
$v_{n+1}, \dots, v_{2n} \in (1, 2)$, to be sorted.
Set $\eps = 1/10n^3$.
We define $2n$ growing disks $D_1, \dots, D_{2n}$ as follows:
for $i = 1, \dots, n$, we center the disk $D_i$ at 
$p_i = \big(2i + i^2 \eps, 0\big)$ and give it the growth
rate $v_i = 1$.
For $i = n + 1, \dots, 2n$, we position the disk 
$D_i$ at 
$p_i = (2i + i^2\eps, 1)$ with growth rate $v_i$ as
in the input; see Figure~\ref{fig:lower_bound} for an
illustration.
Observe that disk $D_{n+i}$ will be eliminated by disk $D_i$ at 
time $t_{n+i}=t(n+i,i)=1/(1 + v_{n+i}) < 1/2$, since 
$t_i > 1/2$ for $1 \leq i \leq n$.
Then, the elimination order of $D_1, \dots, D_{2n}$
lets us deduce the reverse sorted order of 
$\{v_{n+1},\dots,v_{2n}\}$.
An analogous argument also applies to squares.
\end{proof}

We remark that Theorem~\ref{thm:lowerbound} also shows that 
in general the problem does not become easier if we are interested only 
in the elimination order and not the exact elimination times.

\section{Conclusion}

We have presented the first truly subquadratic algorithm for 
the problem of computing the elimination order and elimination times of
a sequence of $n$ growing disks in the plane. Our approach
is very general and also applies to other shapes. However, 
it still falls short of reaching a near-linear time algorithm,
except for the special case of growing cubes. Thus, the most pressing question
remains: can we compute the elimination order of $n$ growing disks
in the plane in $O(n \log n)$ time?

Our algorithm that uses compressed quadtrees comes close,
but it depends on additional parameters of the input: if 
the growth rates vary wildly, or if the points are arranged
unevenly, the running times may deteriorate. Perhaps a more
careful handling of these inputs could enable us to avoid
this dependence. It would also be interesting to see if and how 
the quadtree approach can be adapted to higher dimensions.

Finally, many further well-motivated variants of the
problem are possible. For example, 
Castermans~\etal~\cite{cssv-acgs-18} consider the setting
where two touching disks are replaced by a new, common,
disk, instead of one of them disappearing. It is a promising
research direction to explore these variants and to see
in how far our techniques are applicable or which new
ideas are required.

\subparagraph*{Acknowledgments.}
This work was initiated during the 20th Korean Workshop on 
Computational Geometry. The authors would like to thank the 
other participants for motivating and insightful discussions. 
We would also like to thank the anonymous reviewers for their close
reading of the paper and for many helpful comments that
significantly improved the presentation of the paper.

\bibliography{gdisks}

\end{document}